\definecolor{citecolor}{rgb}{0.1,0,0.5}
\definecolor{linkcolor}{rgb}{0,0.1,0.5}
\begin{document}

\iftitle{Limit shapes for Gibbs ensembles of partitions}
\ifauthor{Ibrahim Fatkullin}{University of Arizona}
\ifauthor{Valeriy Slastikov}{University of Bristol}

\ifabstract{We explicitly compute limit shapes for several grand canonical Gibbs ensembles of partitions of integers. These ensembles appear in models of aggregation and are also related to invariant measures of zero range and coagulation-fragmentation processes.  We show, that all possible limit shapes for these ensembles fall into several distinct classes determined by the asymptotics of the internal energies of aggregates.}

\section{Introduction} 
Imagine a fog, a colloid, or a polymeric melt: systems in which identical primitives form aggregates of various sizes. In this paper we study the size distributions of such aggregates by analyzing limit shapes for various Gibbs-type probability measures on partitions.  These measures are prescribed by specifying the internal energy of aggregates as a function of their size. The asymptotic behavior of internal energy at infinity then determines which particular limit shape is chosen (or whether it exists altogether). In what follows, we occasionally borrow from the polymer physics language and call the primitives {\em monomers} and the aggregates --- {\em polymers.} Similar measures also appear as invariant measures of zero range processes in models of Bose-Einstein condensation \cite{andjel1982invariant, pitman1875combinatorial, funaki2010hydrodynamic, ercolani2014random}; in genetics \cite{ewens1972sampling,kingman1978random}; and in the context of general coagulation-fragmentation phenomena \cite{bertoin2006random}.

Let us first review some definitions and fundamental results related to partitions which are utilized in this work. Take a positive integer, $M$ and represent it as a sum of positive integers; this representation is called a {\em partition,} e.g., here is a partition of the number 14: 
\begin{equation}
	14=1+1+2+2+3+5.
\end{equation}
The number $M$ may be regarded as the number of monomers, and each summand in the partition --- as a polymer. Given a partition, denote the number of summands equal to $k$ by $\pn_k$; use the bold-face $\vv\pn$ to denote the entire sequence $(\pn_k)$, $k\in\N$. For example, the partition above corresponds to $\vv\pn=(2,2,1,0,1,0,0,\ldots)$. Any given \smash{$\Z^+$}-valued sequence (a sequence of non-negative integers) $\vv\pn$ for which 
\begin{equation}\label{eq:def_mass}
	\Mon(\vv\pn)\,\ass\,\sum_{k=1}^{\infty}k\pn_k
\end{equation}
is finite is a partition of $M=\Mon(\vv\pn)$; we refer to $\Mon(\vv\pn)$ as the {\em size} or the {\em mass} of the sequence $\vv\pn$. We say that $\vv\pn$ {\em partitions} $M$ and denote it by $\vv\pn\prn M$. $\Mon(\vv\pn)$ may be regarded as the total mass of a polymeric system whose state is described by $\vv\pn$. We denote the set of all partitions of $M$ by $\setP_M$, and the union of sets of all partitions of all positive integers by $\setP$,
\begin{equation}
	\setP=\bigcup_{M=0}^\infty\setP_M.
\end{equation}
For the purpose of convenience, we agree that $M=0$ has a single partition, $\vv\pn=(0,0,0,\ldots)$.
The number of summands greater than or equal to $x$ in a given partition, $\vv\pn$, is called  the {\em size distribution function} of the partition,
\begin{equation}\label{eq:dist_1d}
	f(x;\vv\pn)\,\ass\,
	\sum_{k\geq x}\pn_k.
\end{equation}
Observe that
\begin{equation}
	\int_0^\infty f(x;\vv\pn)\md x
	\,=\,\Mon(\vv\pn).
\end{equation}
There is a straightforward correspondence between partitions of integers, Young diagrams, and distribution functions which is illustrated in Figure~\ref{fig:mono_poly}.  In what follows we identify partitions of integers, Young diagrams and sequences of nonnegative integers with finite mass, and use elements of $\setP$ to represent all of these objects.

%
\begin{figure}
	\hfill
	\scalebox{0.6}{\includegraphics{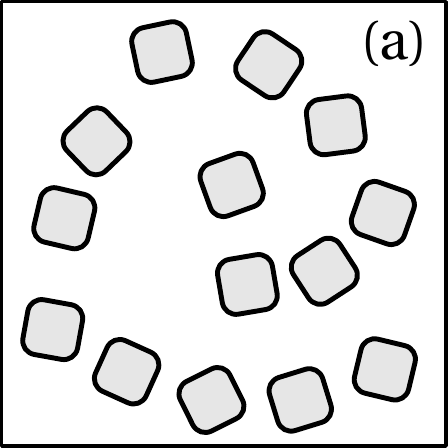}}\hfill
	\scalebox{0.6}{\includegraphics{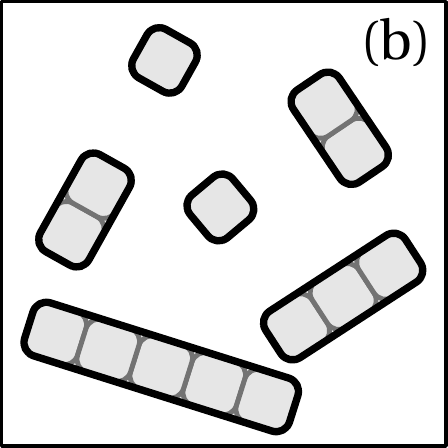}}\hfill
	\scalebox{0.6}{\includegraphics{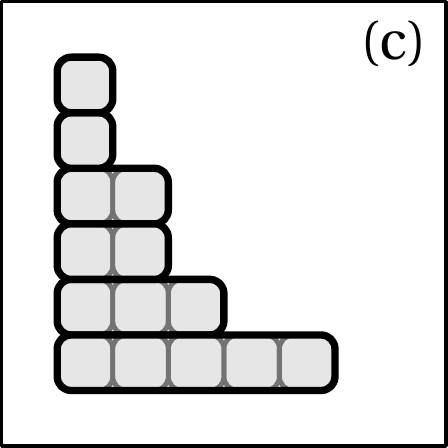}}\hfill
	\scalebox{0.6}{\includegraphics{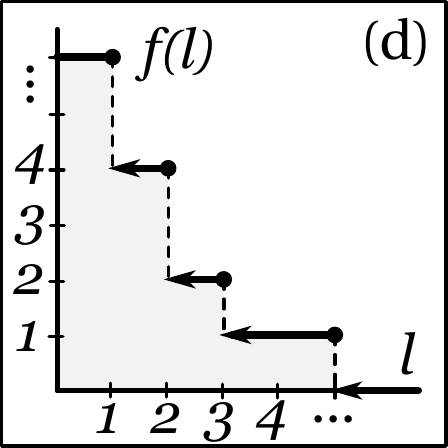}}\hfill\rule{0pt}{0pt}
	\caption{\label{fig:mono_poly}
		$14$ {\em monomers} (a) form $6$ {\em polymers} (b) according to the partition, $14=1+1+2+2+3+5$. Stacking these polymers on top of each other in non-increasing order, we get a Young (Ferrers) diagram (c), which itself is the {\em subgraph} of the distribution funtion $f(x)$ defined in equation \eqref{eq:dist_1d}.}
\end{figure}

\paragraph{Uniform measures on partitions of integers.} Fix $M$, let all partitions of $M$ be equiprobable, e.g., the partition, $14=1+1+2+2+3+5$, has the same probability, $1/Q_{14}$, as $14=3+5+6$. Here $Q_M$ denotes the {\em partition number} of $M$, i.e., the number of ways to represent $M$ as a sum of positive integers. Recall the Hardy-Ramanujan asymptotic formula, \cite{Hardy18} 
\begin{equation}\label{eq:HR}
	Q_M
	\,\sim\,
	\frac{1}{4M\sqrt{3}}
	\exp\{\pi\sqrt{2M/3}\},
	\quad\text{as}\quad
	M\to\infty.
\end{equation}

Once we prescribe a probability measure on $\setP_M$ (or $\setP$), the distribution function becomes a random function. What is the {\em typical} behavior of $f(x;\vv\pn)$ as $M\to\infty$? As $M$ grows, a typical summand in its partition grows as well, so we must apply a proper scaling to $f(x;\vv\pn)$ in order to obtain a nontrivial limit. Consider
\begin{equation}\label{eq:size_dist_scaled}
	F_M(x;\vv\pn)\,=\,f(x\sqrt{M};\vv\pn)/\sqrt{M}.
\end{equation}
This scaling effectively shrinks the size of squares in a Young diagram by a factor of $\sqrt{M}$; see Figure~\ref{fig:lim_shape}. Note  that the total integral of $F_M(x)$ over $\R^+$ is unity whenever $\vv\pn\prn M$.  It turns out that as $M\to\infty$, with overwhelming probability, $F_M(x;\vv\pn)$ concentrate near a deterministic {\em limit shape},
\begin{equation}\label{eq:lim_shape_form}
	F(x)\,=\,-\frac{\sqrt{6}}{\pi}\ln\Big(1-\me^{-\pi x/\!\sqrt{6}}\Big).
\end{equation}
Namely, for all $a,b$, such that $0<a<b<\infty$; $\epsilon>0$, there exists $M_0$ such that for all $M>M_0$,
\begin{equation}\label{eq:lim_shape1}
	\Prob_{M}\bigg\{\sup_{x\in[a,b]}|F_M(x;\vv\pn)-F(x)|>\epsilon\bigg\}\,<\,\epsilon.
\end{equation}
Here $\Prob_{M}$ denotes the uniform probability measure on partitions of $M$. This result was proven by A.~Vershik, who also proposed a general method for analysis of limit shape problems in a number of related systems \cite{vershik1996statistical}; his method is utilized extensively in this work. The limit shape formula \eqref{eq:lim_shape_form} first appeared in a study by M.~Szalay and R.~Turan \cite{szalay1977some}. 

%
\begin{figure}[b]
	\hfill\qquad
	\scalebox{0.25}{\includegraphics{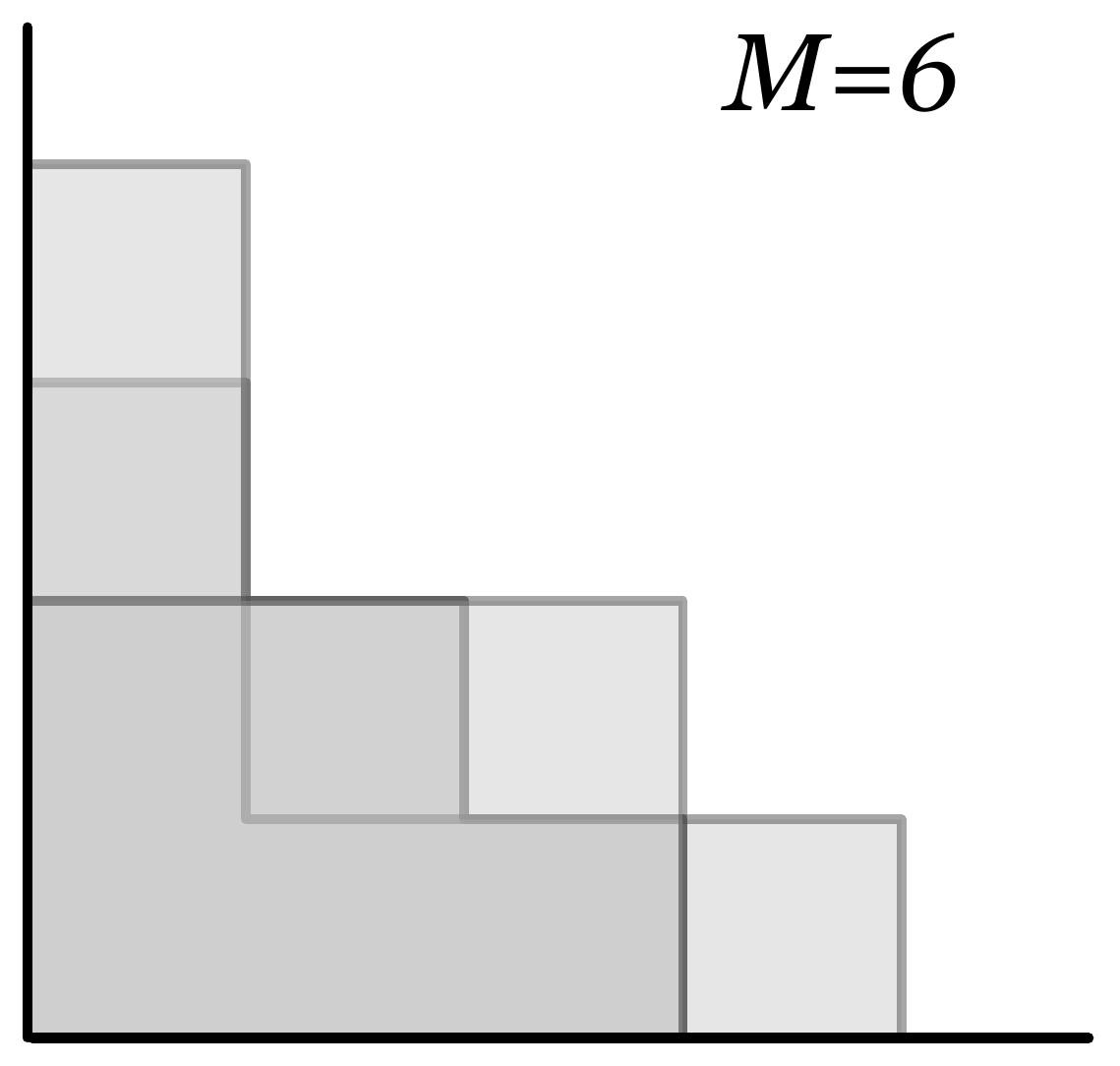}}\hfill
	\scalebox{0.25}{\includegraphics{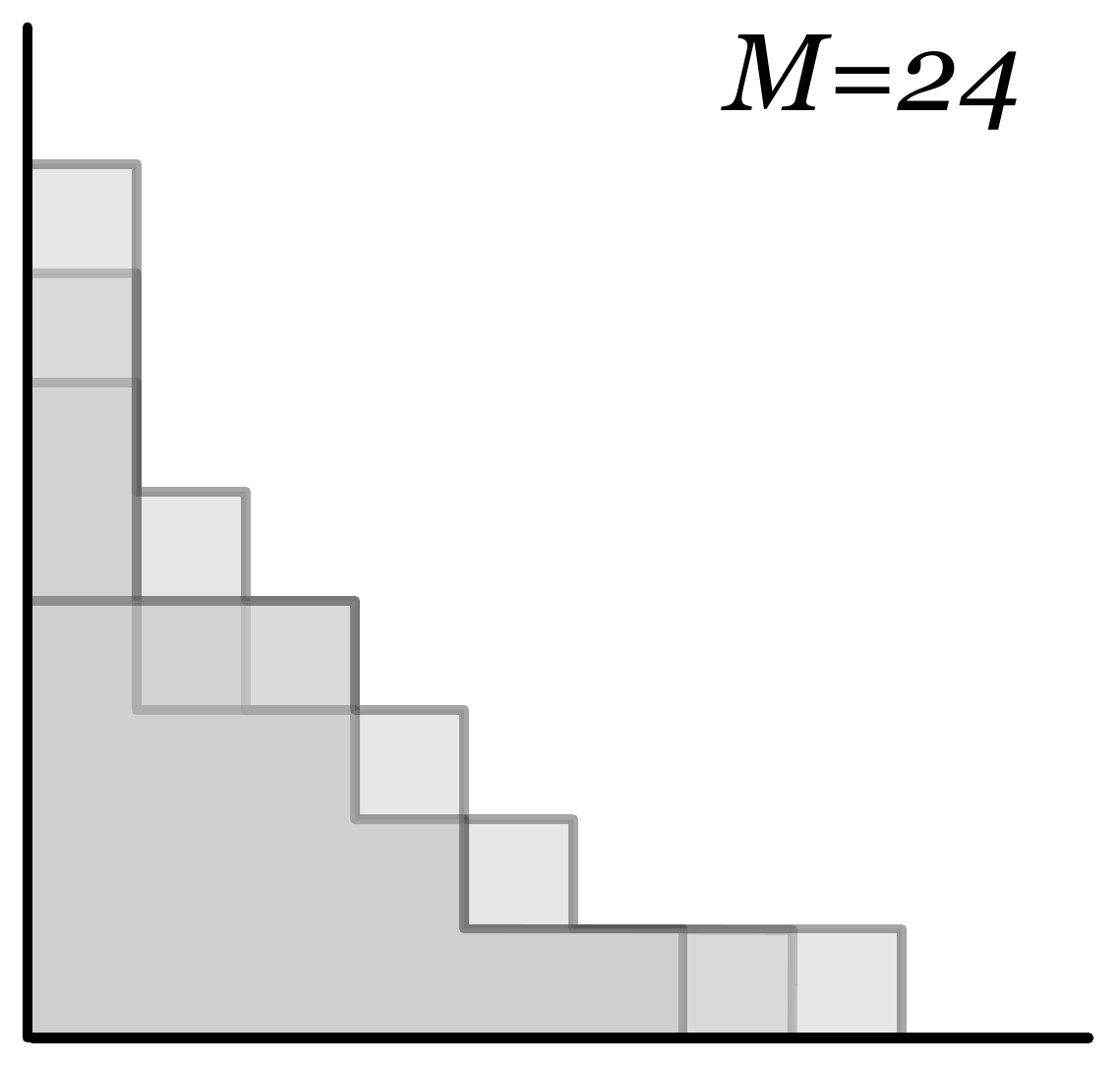}}\hfill
	\scalebox{0.25}{\includegraphics{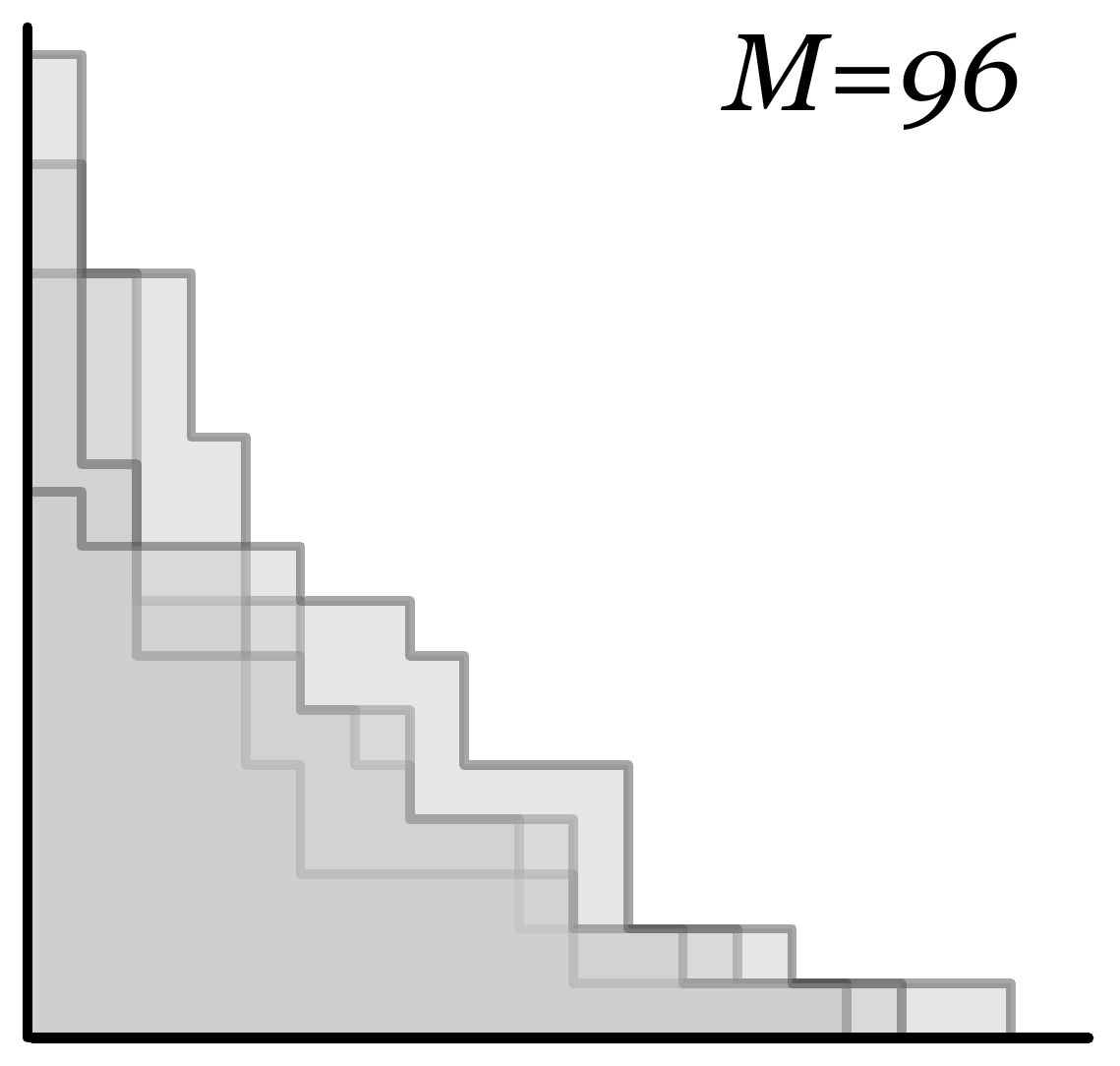}}\hfill
	\scalebox{0.25}{\includegraphics{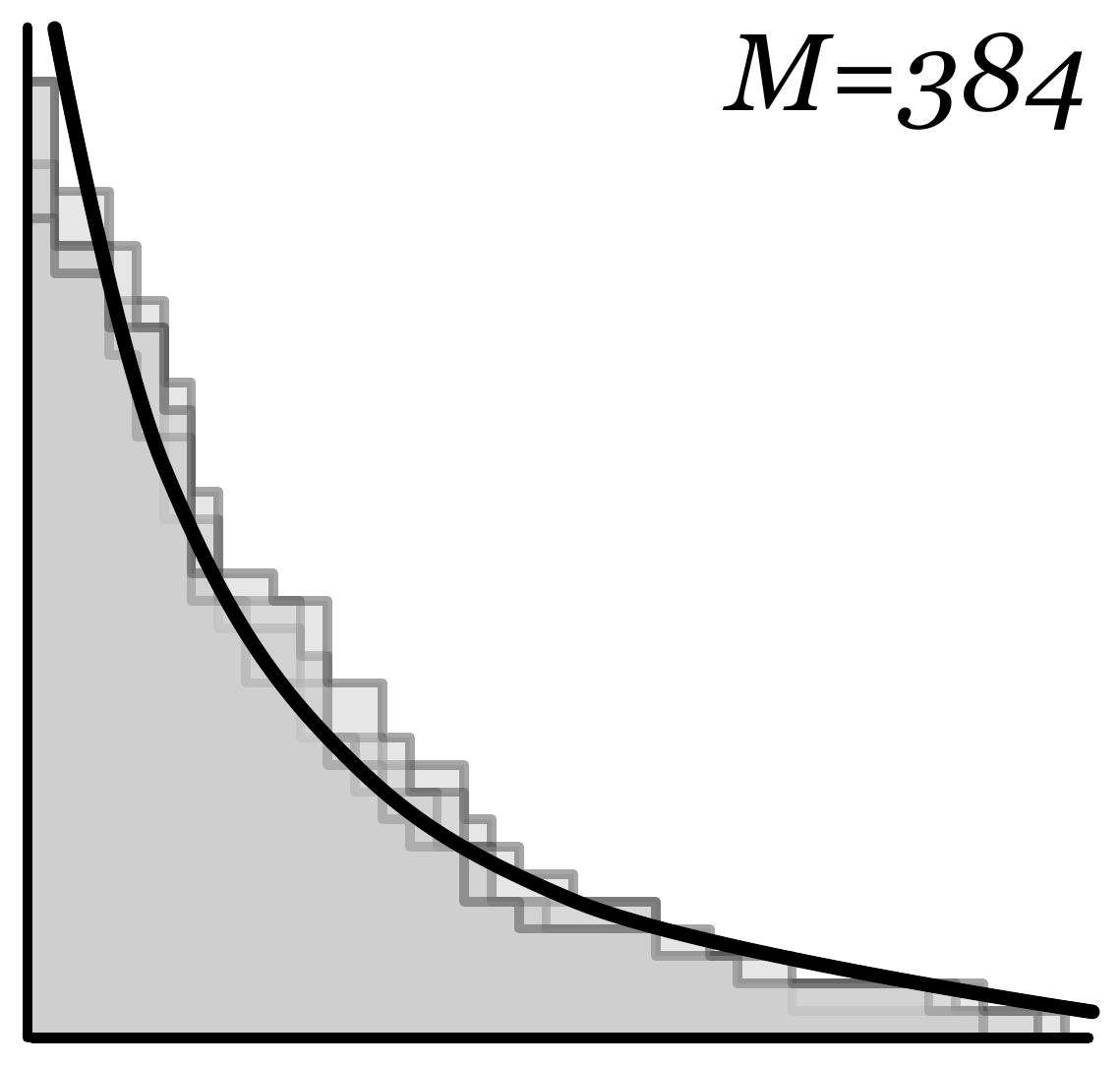}}\qquad\hfill\rule{0pt}{0pt}
	\caption{\label{fig:lim_shape}
		A few sample functions, $F_M(x;\vv\pn)$, are depicted in grey color for several values of $M$. As $M$ becomes large, their {\em typical} graphs approach the deterministic {\em limit shape,} $F(x)$ --- black line in the right-most plot.}
\end{figure}

In the language of polymers, the statement above means that in a system where all groupings of monomers into polymers are equiprobable, in the {\em thermodynamic limit} as $M\to\infty$, there exists a limiting polymer size density given (after appropriate rescaling) by
\begin{equation}
	\od(x)\,=\,-F^\prime(x)\,=\,
	\frac{1}{\me^{\,\pi x/\!\sqrt{6}}-1}.
\end{equation}
Note that in this model the monomers are {\em indistinguishable,} e.g., in a partition, $3=1+2$, we do not differentiate which of the two monomers formed the polymer of size two and which one remained separate. Because of this, we sometimes refer to such systems as {\em quantum}, alluding to the indistinguishability of quantum particles.

\paragraph{Uniform measures on partitions of sets: Bell statistics.} If, unlike in the model described above, we want to distinguish the monomers, then we must consider partitions of sets rather than integers. For example, the partition $3=1+2$ corresponds to three distinct set partitions: $\{\{m_1\},\{m_2,m_3\}\}$, $\{\{m_2\},\{m_1,m_3\}\}$, and $\{\{m_3\},\{m_1,m_2\}\}$, each specifying which particular monomers, elements of the set $\{m_1,m_2,m_3\}$, formed a polymer of size two and which one remained unattached. Because the monomers in this model are {\em distinguishable,} as are classical particles, we refer to such systems as {\em classical}. The name {\em Bell statistics} honors E.~T.~Bell who studied partitions of sets in the first half of the twentieth century.

Set partitions may also be characterized by sequences in $\setP$: a particular sequence $\vv\pn$ corresponds to all partitions of a set with cardinality $M=\Mon(\vv\pn)$, which contain exactly $\pn_k$ subsets of cardinality $k$. There are $M!/(\prod_{k=1}^{\infty}(k!)^{\pn_k}\pn_k!)$
such set partitions (although this product is infinite, only a finite number of terms in it differs from one). This implies that the uniform probability measures on partitions of sets induce the following measures on $\setP_M$:
\begin{equation}\label{eq:bell_stat}
	\Prob_M\{\vv\pn=\vv P\}
	\,=\,
	\frac{M!}{B_M}\prod_{k=1}^\infty\frac{1}{(k!)^{P_k}P_k!}.
\end{equation}
Here $B_M$ is the $M$-th Bell number --- the number of partitions of a set with cardinality $M$. 
The asymptotic behavior of these measures as $M\to\infty$ has been investigated by Yu.~Yakubovich \cite{yakubovich1995asymptotics}. The appropriate scaling for the limit shape is
\begin{equation}
	F_M(x;\vv\pn)\,=\,\me^{-\nu} f(\nu x;\vv\pn),
	\qquad\text{where~}\nu\text{~solves}\quad
	\nu\me^\nu\,=\,M.
\end{equation}
The limit shape itself is given by
\begin{equation}\label{eq:step}
	F(x)\,=\,\ind{\{x\leq1\}}(x)
\end{equation}
where 
$\ind{\set A}(\cdot)$ denotes the indicator function of a set $\set A$. This implies that the size density of aggregates becomes atomic, i.e., in the thermodynamic limit, after the appropriate rescaling, all polymers have size one.

\paragraph{A few more examples: Haar statistics; Plancherel measure.} Another measure on partitions may be obtained by pushing forward the uniform (Haar) measure on the symmetric group. Each permutation of $M$ elements (of some set) may be decomposed into cycles; denoting by $\pn_k$ the number of cycles of length $k$, we get a natural surjection of the symmetric group onto $\setP_M$. The resulting measure is prescribed (up to normalization) by a formula similar to that in \eqref{eq:bell_stat}, except in the denominator one has $k^{P_k}$ instead of \smash{$(k!)^{P_k}$}. In terms of polymers, this model describes a {\em classical} ensemble of {\em loop polymers,} i.e., polymers whose ends are connected to form a loop. Indeed, there are exactly $(k-1)!$ ways to arrange $k$ distinguishable monomers into a loop, which explains the $k^{P_k}$ factor. These measures, however, do not have a nontrivial limit shape for any rescaling of the size distribution function \cite{schmidt1977limit}.

Another celebrated family of measures on partitions of integers (or Young diagrams) is that of {\em Plancherel} measures. These measures are related to irreducible representations of the symmetric group of $M$ elements, which may be parametrized by Young diagrams of size $M$. The Plancherel measure on $\setP_M$ is given by 
\begin{equation}
	\Prob_M\{\vv\pn=\vv P\}
	\,=\,\frac{\dim^2(\vv P)}{M!}.
\end{equation}
Here $\dim(\vv P)$ is the dimension of the irreducible representation labeled by $\vv P$; it is equal to the number of {\em Young tableaux} (diagrams filled with numbers, non-increasing along rows and columns) corresponding to the diagram $\vv P$. The limit shape problem for the Plancherel measures was first studied by B.~F.~Logan and L.~A.~Shepp \cite{logan1977variational}, and by S.~Kerov and  A.~Vershik \cite{kerov1977asymptotics}. The corresponding limit shape function does not have a simple explicit form, however, it is described by the following formula if the coordinate frame is rotated clockwise by $45^\circ$:
\begin{equation}
	y(x)\,=\,\frac{2}{\pi}\left(x\arcsin(x/2)\,+\,\sqrt{4-x^2}\right),\quad |x|\leq2.
\end{equation}
For more details, we refer the reader to the aforementioned studies, or to the subsequent works by A.~Borodin, A.~Okounkov, and G.~Olshanski, e.g., \cite{borodin2000asymptotics}, who refined the earlier results, studied fluctuations near the limit shape and related them to determinantal point processes.
\subsection{Gibbs ensembles of integer partitions}\label{sec:Gint}
The examples reviewed above are the most fundamental measures on partitions. One might say that they are purely {\em entropic}, i.e., they describe the configuration spaces of respective systems. In many physically interesting contexts, one must weigh these configuration spaces with some additional factors which account for interactions between the monomers (or polymers) in the system. In this paper we calculate the limit shapes for partitions of integers weighted by factors representing the internal energies of (quantum) polymers. The Gibbs ensembles are prescribed by specifying the energy (Hamiltonian) of partitions; in the simplest form, 
\begin{equation}\label{eq:ham1}
	\Ham(\vv\pn)
	\,\ass\,
	\sum_{k=1}^{\infty}E_k \pn_k.
\end{equation}
The numbers $E_k$ represent the {\em internal energies\,} of polymers of size $k$. The Hamiltonians prescribed by equation \eqref{eq:ham1} describe systems in which only the monomers within the same polymer interact with each other. 

\paragraph{A general pair-wise interaction.} Suppose each polymer is characterized by some {\em state} variable, $s$; it may represent, e.g., polymer's location in physical space, orientation, etc. Let $\pn_{k,s}$ denote the number of polymers of size $k$ in the state $s$. The polymers may also interact with each other; let the energy of interaction of two polymers of sizes $k$ and $k^\prime$ and states $s$ and $s^\prime$ be given by $\intr_{k,s;\,k^\prime\!,s^\prime}$. The total energy of such a system is then given by
\begin{equation}
	\Ham(\vv\pn)
	\,=\,
	\sum_{(k,s)}E_{k,s}\, \pn_{k,s} \;+
	\frac{1}{2}\sum_{(k,s)}\;\sum_{(k^\prime\!\!,\,s^\prime)\neq(k,s)}
	\intr_{k,s;\,k^\prime\!\!,\,s^\prime}\,\pn_{k,s}\pn_{k^\prime\!\!,\,s^\prime}.
\end{equation}
This scenario is quite general and encompasses all systems with pair-wise interactions; we do not consider such generality here and concentrate specifically on simpler Hamiltonians as given in \eqref{eq:ham1}.

\paragraph{Canonical ensembles} are probability measures defined on $\setP_M$-s --- partitions of a particular integer, $M$. Given a Hamiltonian $\Ham(\vv p)$ and the {\em inverse temperature} $\beta$, prescribe
\begin{equation}\label{eq:canon}
	\Prob_M\{\vv\pn=\vv P\}
	\,=\,
	\frac{1}{\cpsum{M}{\beta}}\,\me^{-\beta\Ham(\vv P)};\qquad
	\cpsum{M}{\beta}
	\,=\,
	\sum_{\vv\pn\prn M}\me^{-\beta\Ham(\vv P)}.
\end{equation}
When $\beta=0$, the {\em partition sum} $\cpsum{M}{\beta}$ is equal to the partition number of $M$, $Q_M$, i.e., the inverse temperature controls the strength of interactions in the system. One can also treat the canonical measures as measures on the set of all partitions, $\setP$. In this case, the size of the partition, $\Mon(\vv p)$, becomes a random function and $\Prob_M\{\vv\pn=\vv P\}=0$ whenever $\Mon(\vv p)\neq M$. Let us mention that the problem of establishing the asymptotic behavior of the partition sums $\cpsum{M}{\beta}$ given the energies $E_k$, i.e., obtaining the generalizations of the Hardy-Ramanujan formula \eqref{eq:HR}, is an important problem in the field of enumerative combinatorics, see \cite{Gran2015} for some recent developments.

\paragraph{The grand canonical ensembles}
are defined on the set of all partitions, $\setP$. They involve yet another parameter, $\mu$, which we call the {\em chemical potential} it controls the expected number of monomers in the system.  The grand canonical measures are prescribed by setting
\begin{equation}\label{eq:gcanon}
	\Prob_\mu\{\vv \pn=\vv P\}
	\,=\,
	\frac{1}{\gpsum{\mu}{\beta}}\me^{-\beta\Ham(\vv P)-\mu\Mon(\vv P)}.
\end{equation}
Note that we employed a nonstandard definition of chemical potential, which normally appears in the combination, $\exp\{-\beta(\Ham+\mu\Mon)\}$. This helps us reduce clutter in various formulas and makes little difference otherwise (as long as $\beta$ is not sent to zero or infinity).

The grand canonical partition functions, $\gpsum{\mu}{\beta}$, may be expressed as
\begin{equation}\label{eq:gc_part_sum}
	\gpsum{\mu}{\beta}
	\,=\,
	\sum_{\vv\pn\in\setP}\me^{-\beta\Ham(\vv\pn)-\mu\Mon(\vv\pn)}
	\,=\,
	\sum_{M=0}^{\infty}\sum_{\vv\pn\prn M}\me^{-\beta\Ham(\vv\pn)-\mu M}
	\,=\,
	\sum_{M=0}^{\infty}\cpsum{M}{\beta}\,\me^{-\mu M}.
\end{equation}
The last equality implies that the grand canonical partition function, $\gpsum{\mu}{\beta}$ (when treated as a function of $\me^{-\mu}$) is the generating function for the canonical partition sums, $\cpsum{M}{\beta}$. 
Correspondingly, the canonical measures are the conditional restrictions of the grand canonical measures to $\setP_M$-s:
\begin{equation}
	\Prob_M(\cdots)\,=\,\Prob_\mu(\cdots\mid\Mon(\vv p)=M);\qquad
	\Prob_\mu
	\,=\,
	\frac{1}{\gpsum{\mu}{\beta}}
	\sum_{M=0}^\infty
	\cpsum{M}{\beta}\,\me^{-\mu M}\,\Prob_M.
\end{equation}
A fundamental observation made by A. Vershik \cite{vershik1996statistical} is that the grand canonical measures are multiplicative: the partition functions may be represented as a product,
\begin{equation}\label{eq:part_sum_prod}
	\gpsum{\mu}{\beta}
	\,=\,
	\sum_{m=0}^{\infty}\sum_{\vv\pn\prn m}\prod_{k=1}^{\infty}
	\me^{-(\beta E_k+\mu k)\pn_k}
	\,=\,
	\prod_{k=1}^{\infty}\sum_{\pn_k=\,0}^\infty\left(\me^{-\beta E_k-\mu k}\right)^{\!\pn_k}
	\,=\,
	\prod_{k=1}^\infty\frac{1}{1-\me^{-\beta E_k-\mu k}};
\end{equation}
and, respectively,
\begin{equation}\label{eq:independent}
	\Prob_\mu\{\vv \pn=\vv P\}
	\,=\,
	\prod_{k=1}^\infty\Prob_\mu^{(k)}\{p_k=P_k\}
	;\qquad
	\Prob_\mu^{(k)}\{p_k=N\}
	\,=\,{\theta_k^{N}}(1-\theta_k),
	\quad N=0,1,2,\ldots
\end{equation}
Here we denoted
\begin{equation}\label{eq:thetas}
	\theta_k\,\ass\,
	\me^{-\beta E_k-\mu k}.
\end{equation}
Thus the multiplicative property implies that in the grand canonical ensembles (unlike in the canonical ensembles) the numbers of polymers of different sizes are independent random variables, which simplifies analysis of such systems.

\paragraph{Equivalence of ensembles.}
In order to study the limit shape problem, we must consider measures induced by $\Prob_M$ or $\Prob_\mu$ on some suitable function space via push-forward maps by the properly scaled size distribution functions, $f(x;\vv\pn)$, defined in equation \eqref{eq:dist_1d}. The equivalence of ensembles means that in the {\em thermodynamic} limit as $M\to\infty$ and $\mu\dto\mu_*$ (for a suitable $\mu_*$, see below) the measures induced by the canonical and grand canonical ensembles converge to the same limit. This has been proven for multiplicative measures in cases when the limit distributions are atomic, i.e., concentrated on a single function --- the limit shape \cite{vershik1996statistical,vershik1997limit}. In the context of our work, we only study the grand canonical ensembles; the corresponding results regarding the large $M$ limits in the canonical setting follow whenever the limit shape exists. 
\paragraph{Internal energies.} Finally, let us discuss the classes of internal energies, $E_k$, for which the grand canonical ensembles are well-defined and correspond to physically relevant scenarios. From the basic theory of convergence of infinite series and products we deduce that the necessary condition for existence of the partition functions in equations \eqref{eq:gc_part_sum} or \eqref{eq:part_sum_prod} and the grand canonical measures, is that $\mu k\geq-\beta E_k$ for all $k$. Introduce the energies of monomers inside of a polymer of size $k$,
\begin{equation}\label{eq:prt_fun_fin}
	\eps_k\,=\,\frac{E_k}{k};
	\qquad\text{denote}\qquad
	\eps_*
	\,\ass\,
	\inf_{k\in\N}\eps_k
	\,=\,
	-\sup_{k\in\N}(-\eps_k).
\end{equation}
We refer to $\eps_*$ as the {\em ground state} energy, even though it is not necessarily attained for any particular value of the polymer size, $k$. The grand canonical measures exist for all $\mu$ which satisfy
\begin{equation}\label{eq:mustar}
	\mu\,>\,\mu_*\,\ass\,-\beta\eps_*.
\end{equation}
(The case $\mu=\mu_*$ requires additional considerations.) This observation leads to the following scenarios:
\begin{enumerate}[label=(S\arabic*)]
	\item\label{case1} The infimum in equation \eqref{eq:prt_fun_fin} is $\eps_*=-\infty$, attained in the limit as $k\to\infty$. In this case the grand partition function, $\gpsum{\mu}{\beta}$, is infinite for all $\mu$, i.e., the sums and products in equation \eqref{eq:part_sum_prod} are undefined, and the grand canonical measures do not exist for any value of the chemical potential, $\mu$. In order to analyze phenomena in the thermodynamic limit, one must work with canonical ensembles directly; we do not study this scenario here.
	\item\label{case2} The infimum is reached at some finite values of $k$. In this case $\gpsum{\mu}{\beta}\to\infty$ as $\mu\dto\mu_*$; the probability concentrates on the corresponding discrete set of states, $p_k$, and thus there is no limit shape. We call this phenomenon {\em condensation}. One can ``remove'' these states from the system and analyze the distribution on the remaining states, which again reduces to either this case, or the case \ref{case3} below. An example demonstrating this procedure is given in Section~\ref{sec:discrete_example}.
	\item\label{case3} The infimum is attained as $k$ tends to infinity, and $\eps_*$ is finite; we say that the ground state is {\em unattainable.} Without loss of generality, we may set $\eps_*=0$ and consider the limit $\mu\dto0$ (instead of $\mu\dto-\beta\eps_*$), as subtracting $\eps_*$ from each $\eps_k$ is equivalent to adding $\beta\eps_*$ to the chemical potential $\mu$: all probabilities depend on the sums $\beta\eps_k+\mu$. This paper is mostly dedicated to this scenario.
\end{enumerate}
%
%
%
\begin{figure}
	\hfill
	\scalebox{0.6}{\includegraphics{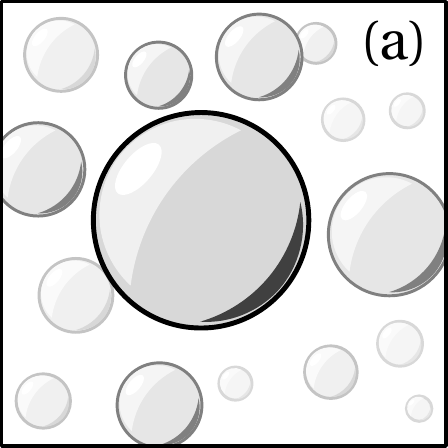}}\hfill
	\scalebox{0.6}{\includegraphics{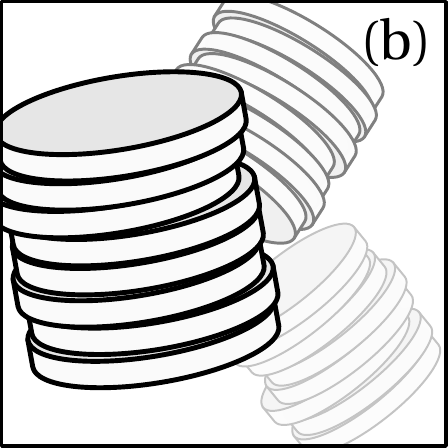}}\hfill
	\parbox[b]{3in}{\caption{\label{fig:energies}
		The surface area of a $d$-dimensional droplet (a) is proportional to its mass to power $(d-1)/d$. Each stack of disc-like molecules in discotic liquid crystals has roughly the same energy, independent of its size, which comes from the two hydrophobic faces on its ends (b). This simplistic reasoning justifies the energies suggested in formula \eqref{eq:energies}.\vspace{-2ex}}}\hfill\rule{0pt}{0pt}
\end{figure}
We see that whenever the internal energies $E_k$ grow super-linearly as $k\to\infty$, we end up with scenarios \ref{case1} or \ref{case2}. Sub-linear growth, e.g., 
\begin{equation}\label{eq:energies}
	E_k\,\sim\,k^\alpha,\quad\alpha\leq 1;\qquad\qquad
	E_k\,\sim\,\ln k;\qquad\qquad
	E_k\,\sim\,\ln\ln k.
\end{equation}
allows for more interesting behaviors; see Figure~\ref{fig:energies} for illustration and some extra motivation. Later we show (Proposition~\ref{prop:therm}) that faster than logarithmic growth does not allow for a thermodynamic limit in the context of grand canonical ensembles (the expected number of monomers in the system remains finite). This implies that in order to study the limit shapes in such ensembles, it is enough to limit ourselves to energies satisfying the following assumption: 
\begin{assum}\label{ass:nrg_quantum}
The internal energies of polymers are given by
\begin{equation}\label{eq:energies_u}
	E_k\,=\,u(\ln k).
\end{equation}
where $u:[0,\infty)\to(0,\infty)$ is a (strictly positive) differentiable function, such that $\lim_{x\to\infty}u^\prime(x)$ exists and is finite (without loss of generality it may be set to either 0 or 1, as all other cases may be covered by adjusting the value of $\beta$ appropriately.)
\end{assum}
%


\subsection{Gibbs ensembles of set partitions}
One can introduce the Gibbs ensembles on partitions of sets in the same way as we introduced them for the partitions of integers. In this case, the probabilities acquire additional combinatorial factors as in equation \eqref{eq:bell_stat}:
\begin{equation}\label{eq:canon_sets}
	\Prob_M\{\vv \pn=\vv P\}
	\,=\,
	\frac{1}{\cpsum{M}{\beta}}\me^{-\beta\Ham(\vv P)}\,
	\prod_{k=1}^\infty\frac{1}{(k!)^{P_k}P_k!};\qquad
	\cpsum{M}{\beta}
	\,=\,
	\sum_{\vv\pn\prn M}\frac{\me^{-\beta\Ham(\vv P)}}{(k!)^{P_k}P_k!}.
\end{equation}
Some authors refer to {\em these} ensembles as Gibbs measures on partitions, see e.g., the treatise by J.~Pitman \cite{pitman1875combinatorial}. Thus it is important to be aware of the difference between these measures and measures on the partitions of integers introduced here in Section~\ref{sec:Gint}, namely of the factors $P_k!$ in formula \eqref{eq:canon_sets}. Because of these factors, the distributions of individual $\pn_k$-s in the grand canonical ensembles of set partitions are Poisson, whereas the distributions of $\pn_k$-s for integer partitions are geometric, cf \eqref{eq:independent}.

Probabilities of particular partitions for the grand canonical ensembles are prescribed by
\begin{equation}\label{eq:gcanon_sets}
	\Prob_\mu\{\vv \pn=\vv P\}
	\,=\,
	\frac{1}{\gpsum{\mu}{\beta}}\me^{-\beta\Ham(\vv P)-\mu\Mon(\vv P)}\,
	\prod_{k=1}^\infty\frac{1}{(k!)^{P_k}P_k!};\qquad
	\gpsum{\mu}{\beta}
	\,=\,
	\sum_{M=0}^{\infty}\cpsum{M}{\beta}\,\me^{-\mu M}.
\end{equation}
These measures are also multiplicative:
\begin{equation}\label{eq:gibbs_bell_stat}
	\Prob_\mu\{\vv\pn=\vv P\}
	\,=\,
	\prod_{k=1}^\infty\Prob_\mu^{(k)}\{p_k=P_k\};\qquad
	\Prob_\mu^{(k)}\{p_k=N\}\,=\,\me^{-\alpha_k}\frac{\alpha_k^{N}}{N!},
	\qquad
	\alpha_k\,=\,
	\frac{\me^{-\beta E_k-\mu k}}{k!}. 
\end{equation}
Because the distributions of $p_k$-s are Poisson, rather than geometric, unlike in the {\em quantum} case, the {\em classical} grand canonical ensembles exist for arbitrary internal energies $E_k$ and for all $\mu$.

If the energies are sufficiently {\em tame,} i.e., if $E_k$ do not tend to negative infinity as fast as or faster than $-C\,k\ln k$, as \mbox{$k\to\infty$}, the thermodynamic limit is obtained when $\mu\to-\infty$. It is possible to show that for such energies the limit shape is given by the step function, as in \eqref{eq:step}, and the specific asymptotic behavior of the energies only affects the scaling for the size distribution function. The details of this calculation will be provided elsewhere. 

The so-called {\em expansive} case when \smash{$\me^{-\beta E_k}/k!\sim k^{p-1}$} was addressed by M.~Erlihson and B.~Granovsky directly in the canonical setting \cite{erlihson2008limit}, and also by A.~Cipriani and D.~Zeindler~\cite{cipriani2013limit} using different methods. Remarkably, the limit shape function that appears in these studies has the same functional form as one of those derived here in a different setting, see equation \eqref{eq:asympt_crit_shape} in the following section.


\subsection{Informal statement of the results}\label{sseq:inf_results}

Let us now state the principal results of this paper regarding the limit shapes for partitions of integers (quantum grand canonical ensembles) and interpret them in the polymer physics language. Results concerning the {\em classical} ensembles and partitions of sets will be discussed elsewhere. The more precise technical statements are formulated in the following Section~\ref{sec:calculations}.

Assume that the energies per monomer, $\eps_k$, are such that the scenario~\ref{case3} occurs, i.e., the ground state is unattainable; scenario~\ref{case2} is mentioned below and discussed in Section~\ref{sec:discrete_example} in greater detail. Let the energies $E_k$ be renormalized so that $\eps_*=0$, which may be achieved by changing $E_k\mapsto E_k-\eps_* k$ if $\eps_*\neq 0$. The following regimes are possible:


\paragraph{Supercritical growth, $\boldsymbol{E_k\gg\ln k}$.} 
The energies grow too fast at infinity, and the expected number of monomers, $\E\Mon(\vv p)$, in the grand canonical ensembles remains finite for all values of $\mu$. Because of this, one cannot obtain the thermodynamic limit in the grand canonical ensembles and must work with canonical ensembles, considering the limit of $M\to\infty$ directly. The limit $\mu\to0$ is trivial, obtained by setting $\mu=0$ in formulas \eqref{eq:independent} and \eqref{eq:thetas}. There is no limit shape, and the grand canonical distribution remains discrete (geometric distribution) with
\begin{equation}\label{eq:lim_discrete}
	\Prob\{p_k=N\}
	\,=\,{\me^{-\beta N E_k}}(1-\me^{-\beta E_k}).
\end{equation}
%


\paragraph{Critical growth, $\boldsymbol{E_k\sim\ln k}$.}
The value of the inverse temperature plays an important role in this case. For high temperatures, when $\beta<1$, the limit shape exists and, provided the width of cells in Young diagrams is set to $\mu$, while their height is set to \smash{$\mu^{1-\beta}/\Gfun(2-\beta)$}, is given by an incomplete $\Gfun$-function,
\begin{equation}\label{eq:asympt_crit_shape}
	F(x)
	\,=\,
	\frac{1}{\Gfun(2-\beta)}\int_x^\infty y^{-\beta}\me^{-y}\md y.
\end{equation}
For low temperatures, when $\beta>1$, there is no limit shape; physically this means that different realizations of the system do not have statistically similar length distributions. Technically, in this regime, the variance of the distribution functions tends to infinity as $\mu\dto0$ (see Section~\ref{sec:calculations}). The critical case, $\beta=1$, is more subtle and the limiting behavior depends on the higher-order asymptotic of $E_k$-s. For example, if we have the exact equality $E_k=\ln k$ for all or all large enough $k$, the scaled size distributions functions converge to a (non-stationary) stochastic process with independent increments, rather than to a deterministic limit shape function: see discussion in Section~\ref{sec:discrete_example} and Remark~\ref{rem:process}. There is also no automatic equivalence of grand canonical and canonical ensembles if $\beta=1$, as the limiting measure is not atomic.


\paragraph{Subcritical growth, $\boldsymbol{\const\ll E_k\ll\ln k}$.}
The limit shape in this regime is given by
\begin{equation}\label{eq:asympt_subcrit_shape} 
	F(x)
	\,=\,
	\me^{-x},
\end{equation}
provided that the height of cells in Young diagrams, is set to \smash{$\mu\exp(\beta E_{1/\mu})$} and the width is  set to $\mu$. Note that the limit shape itself does not depend on the value of the inverse temperature $\beta$: the latter only appears in the height to width scaling factor.


\paragraph{Constant, $\boldsymbol{E_k\sim1}$.} Provided the cell width is set to $\mu$ and the height to \smash{$\mu[\me^{\,\beta}\Li_2\big(\me^{-\beta}\big)]^{-1}$}, the limit shape is given by
\begin{equation}\label{eq:lim_shape_const}
	F(x)
	\,=\,
	\frac{1}{\Li_2\big(\me^{-\beta}\big)}
	\int_x^\infty\frac{\md y}{\me^{\,\beta+y}-\,1}
	\,=\,
	-\,\frac{\ln\Big(1-\me^{-\beta-x}\Big)}{\Li_2\big(\me^{-\beta}\big)}.
\end{equation}
Here $\Li_2(\cdot)$ denotes the dilogarithm. Observe that in this regime limit shape function is the same (up to the scaling factor of \smash{$\pi/\sqrt{6}$}) as the classical limit shape function \eqref{eq:lim_shape_form}, shifted by $\beta$; it also remains bounded  as $x\dto0$ as long as $\beta>0$. 


\paragraph{Decay, $\boldsymbol{E_k\ll\const}$.}
The limit shape (up to a trivial rescaling) is given by the same classical formula, \eqref{eq:lim_shape_form}, i.e., the internal energy does not affect the limit shape. This is not unexpected, as in this case, the energies of large polymers are too small and the entropic effects dominate.


\paragraph{Condensation: scenario \ref{case2}.}
In this scenario, infinitely many monomers form a {\em condensate,} i.e, aggregate into polymers with sizes which yield infimum in \eqref{eq:prt_fun_fin}. If infimum of the internal energies over the remaining states is greater than $\eps_*$, the remaining states only contain a finite number of monomers and may be neglected in the limit. If infimum of internal energies over the remaining states is also equal to $\eps_*$ and they also contain an infinite number of monomers, one can analyze them in a way similar to the previous cases. This scenario is discussed in greater detail in Section~\ref{sec:discrete_example}.

\section{Formal statements and technical calculations}\label{sec:calculations}
Introduce the scaled size distribution functions,
\begin{equation}\label{eq:scale_sdf}
	F_\mu(x;\vv\pn)
	\,=\,
	\frac{f(x/\mu;\vv\pn)}{\mu\E\Mon}
	\,=\,
	\frac{1}{\mu\E\Mon}\sum_{k\geq x/\mu}\pn_k.
\end{equation}
This scaling may be interpreted as setting the widths of the Young diagram cells equal to $\mu$ and the heights --- to $1/(\mu\E\Mon)$; it is unique up to a constant (independent of $\mu$) factor. The reason for this particular scaling becomes apparent from the calculations in Section~\ref{eq:more_calc}. The limit shape (if it exists) is given by the $\mu\dto0$ limit of the expectations of $F_\mu(x)$:
\begin{equation}\label{eq:lim_shape_def}
	F(x)\,\ass\,
	\lim_{\mu\dto0}\E F_\mu(x;\vv\pn)
	\,=\,
	\lim_{\mu\dto0}\;\frac{1}{\mu\E\Mon}\sum_{k\geq x/\mu}\E\pn_k.
\end{equation}
Once we compute the sums in equation \eqref{eq:lim_shape_def} and establish that the $\mu\to0$ limit exists, in order to prove that $F(x)$ is indeed the limit shape, we must also show that the fluctuations around this mean disappear. All of this is summarized in the following theorem:


\begin{theorem}[Limit shapes for partitions of integers]\label{thm:main}
Let the internal energies of classical polymers in a grand-canonical ensemble satisfy Assumption~\ref{ass:nrg_quantum} and the ground state be unattainable; see scenario \ref{case3},  p.~\pageref{ass:nrg_quantum}. Then the function $F(x)$, defined in equation \eqref{eq:lim_shape_def}, exists for all $\beta\geq0$ if $E_k$-s grow sub-logarithmically, and for $\beta\in[0,1)$ if $E_k\sim\ln k$, as $k\to\infty$. Moreover $F(x)$ is the limit shape function, i.e., for every $y>0$ and $\epsilon>0$,
\begin{equation}
	\lim_{\mu\dto0}\;\Prob_\mu\bigg\{\sup_{x\,\geq\,y}\big|F_\mu(x;\vv\pn)\,-\,F(x)\big|\geq\epsilon\bigg\}
	\,=\,0.
\end{equation}
Specific functional expressions for the limit shapes and the corresponding Young diagram scalings are given in Section~\ref{sseq:inf_results}.
\end{theorem}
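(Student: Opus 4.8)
The plan is to split the argument into two independent pieces: first I would establish convergence of the means, $\E F_\mu(x)\to F(x)$, together with the explicit formulas quoted in Section~\ref{sseq:inf_results}; then I would upgrade this to the uniform-in-probability statement by bounding fluctuations and exploiting monotonicity. For the means, recall that the $p_k$ are independent geometric variables, so $\E p_k=\theta_k/(1-\theta_k)$ with $\theta_k=\me^{-\beta E_k-\mu k}$, whence $\E F_\mu(x)=(\mu\E\Mon)^{-1}\sum_{k\geq x/\mu}\theta_k/(1-\theta_k)$ and $\E\Mon=\sum_k k\,\theta_k/(1-\theta_k)$. I would substitute $t=\mu k$ and treat both sums as Riemann sums of mesh $\mu$, so that numerator and denominator become integrals in $t$ as $\mu\dto0$. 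The engine driving the asymptotics is Assumption~\ref{ass:nrg_quantum}: writing $E_{t/\mu}=u\big(\ln(1/\mu)+\ln t\big)$ and applying the mean value theorem gives $E_{t/\mu}=E_{1/\mu}+c\ln t+o(1)$ with $c=\lim_{x\to\infty}u'(x)\in\{0,1\}$, so that $\theta_{t/\mu}\approx\me^{-\beta E_{1/\mu}}\,t^{-\beta c}\,\me^{-t}$. Substituting this and cancelling the common prefactor $\me^{-\beta E_{1/\mu}}$ reproduces each regime at once: $c=1$ yields the incomplete-$\Gfun$ shape \eqref{eq:asympt_crit_shape} normalized by $\Gfun(2-\beta)$; $c=0$ with $E_k\to\infty$ yields $\me^{-x}$ as in \eqref{eq:asympt_subcrit_shape}; and the bounded/decaying cases, where $\theta_k$ does not vanish and the full $\theta_k/(1-\theta_k)$ must be kept, yield \eqref{eq:lim_shape_const} and \eqref{eq:lim_shape_form}.

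For the fluctuations, independence gives $\mathrm{Var}\big(\sum_{k\geq x/\mu}p_k\big)=\sum_{k\geq x/\mu}\theta_k/(1-\theta_k)^2$, and Chebyshev's inequality yields $\Prob_\mu\{|F_\mu(x)-\E F_\mu(x)|\geq\epsilon\}\leq\mathrm{Var}\big/\big(\epsilon^2(\mu\E\Mon)^2\big)$ for each fixed $x>0$. Running the same sum-to-integral analysis shows the variance scales like $\mu^{-1}\me^{-\beta E_{1/\mu}}$ while $(\mu\E\Mon)^2$ scales like $\mu^{-2}\me^{-2\beta E_{1/\mu}}$, so the ratio is of order $\mu\,\me^{\beta E_{1/\mu}}$. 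This tends to $0$ for every $\beta\geq0$ when $E_k$ grows sub-logarithmically, because then $\beta E_{1/\mu}=o\big(\ln(1/\mu)\big)$; in the critical case $E_k\sim\ln k$ it is of order $\mu^{1-\beta}$, which vanishes precisely when $\beta<1$. This is exactly where the hypothesis of the theorem is forced, and it simultaneously explains why no limit shape survives for $\beta\geq1$. Combined with the mean convergence above, this gives $F_\mu(x)\to F(x)$ in probability pointwise.

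To obtain the uniform statement I would use that $F_\mu(\cdot;\vv\pn)$ is almost surely non-increasing (it is a tail sum) and that each limit $F$ is continuous, non-increasing, with $F(\infty)=0$. This is the standard Pólya/Glivenko--Cantelli setup: choose a finite grid $y=x_0<x_1<\dots<x_n$ on which the oscillation of $F$ stays below $\epsilon/2$ and with $F(x_n)<\epsilon/2$; sandwich $F_\mu(x)-F(x)$ between the values at adjacent grid points using monotonicity of both functions; and take a union bound over the finitely many grid points, each controlled by the pointwise convergence just proved. The tail $x>x_n$ is handled by $F_\mu(x)\leq F_\mu(x_n)$ together with $F_\mu(x_n)\to F(x_n)<\epsilon/2$. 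This yields $\sup_{x\geq y}|F_\mu-F|<\epsilon$ with probability tending to $1$, which is the assertion.

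The step I expect to be the main obstacle is making the sum-to-integral approximation genuinely uniform and rigorous, rather than merely asymptotic: the summand depends on $\mu$ through both $E_{t/\mu}$ and $\me^{-t}$, so I need Euler--Maclaurin or monotone-comparison error bounds that hold uniformly on the relevant range of $t$ and that control the small-$t$ and large-$t$ tails where the mean value expansion of $u$ is only valid asymptotically. The quantitatively delicate point is the variance-to-mean-squared ratio, since it is precisely this estimate that produces the sharp threshold $\beta<1$; by comparison, the Chebyshev bound and the monotonicity upgrade are soft and should present no difficulty.
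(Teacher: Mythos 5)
Your proposal is correct, and the first two steps (sum-to-integral asymptotics for $\E F_\mu$ and $\E\Mon$ via the mean value theorem applied to $u$, then a Chebyshev-type variance-over-mean-squared estimate producing the ratio $\mu\,\me^{\beta E_{1/\mu}}$ and hence the threshold $\beta<1$ in the critical case) are essentially identical to the paper's Propositions on the fine asymptotics of $\E\Mon$, the expected size distribution function, and its variance. Where you genuinely diverge is in upgrading pointwise convergence in probability to the uniform statement over $x\geq y$: you use a P\'olya/Glivenko--Cantelli device, exploiting that $F_\mu$ is a non-increasing tail sum and $F$ is continuous and vanishes at infinity, choosing a finite grid and taking a union bound. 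The paper instead observes that $F_\mu(x)-\E F_\mu(x)$ is, as a function of $x$, a tail sum of independent centered random variables, and applies a Kolmogorov-type maximal inequality (its Lemma in the appendix) to bound $\Prob\{\sup_{x\geq y}|F_\mu-\E F_\mu|\geq\epsilon/2\}$ directly by $4\V F_\mu(y)/\epsilon^2$, after first showing $\E F_\mu\to F$ uniformly on $[y,\infty)$ by a compactness argument. Both routes are valid here: the maximal inequality buys a one-line control of the supremum with no grid bookkeeping and no appeal to continuity of $F$, while your argument is softer and more portable (it needs only pointwise convergence in probability plus monotonicity, not independence across $x$), at the cost of the sandwiching estimates you describe. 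The obstacle you flag --- making the Riemann-sum approximation rigorous uniformly in $t$ --- is handled in the paper by dominated convergence: the piecewise-constant integrands are shown to converge pointwise to $x\Phi(x)$ (resp.\ $\Phi$, $\Upsilon$) and to be dominated by a constant multiple of the integrable limit, which is exactly the monotone-comparison bound you anticipate needing.
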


\begin{proof}
We split the proof into three steps; each step involves a technical calculation which we carry out in the subsequent sections.

\paragraph{1) Establish existence of the thermodynamic limit:} verify whether the expected number of monomers in the system (or the size of a typical Young diagram) tends to infinity as $\mu$ tends to zero. The number of polymers of size $k$, $\vv\pn_k$, is distributed according to equation \eqref{eq:independent}; thus we have,
\begin{equation}\label{eq:exp_mon}
	\E\Mon(\vv\pn)\,=\,
	\sum_{k=1}^\infty k\E\pn_k\,=\,
	\sum_{k=1}^\infty\frac{k\theta_k}{1-\theta_k}\,=\,
	\sum_{k=1}^\infty\frac{k}{\me^{\,\beta E_k+\mu k}-\,1}.
\end{equation}
\noindent The asymptotic behavior of this quantity depends on the behavior of $E_k$-s as $k\to\infty$; it is summarized in Propositions~\ref{prop:therm} and \ref{prop:asympt_mon}, Section~\ref{eq:more_calc}. The conclusion is that the thermodynamic limit may only be attained if the energies grow at most logarithmically as $k\to\infty$.

\paragraph{2) Establish the average behavior of the scaled size distribution functions} defined in equation \eqref{eq:scale_sdf}. 
The limit shape (if exists) is given by the limit of expectation of $F_\mu(x)$, formula \eqref{eq:lim_shape_def}. This calculation is carried out in Section~\ref{sec:asympt_exp} and leads to the formulas for $F(x)$ stated informally in Section~\ref{sseq:inf_results}.

\paragraph{3) Establish that the probability of deviations from the mean tends to zero.} This is done using a Kolmogorov-type inequality to bound the supremum norm of deviations from the mean by the variance of the scaled size distribution functions.

First of all, observe that for any $y>0$, as $\mu\dto0$, $\E F_\mu(x)$ converges to $F(x)$ uniformly on $[y,\infty)$. Indeed, otherwise there would exist some $\epsilon>0$, and sequences $\{\mu_k\}$ (converging to 0) and $\{x_k\}$, such that $|F(x_k)-\E F_{\mu_k}(x_k)|>\epsilon.$ This, however is not possible, because $x_k$ may neither tend to infinity, which would contradict integrability of $F(x)$ and $\E F_\mu(x)$, nor have a converging subsequence, which would contradict pointwise convergence of $\E F_\mu(x)$ to $F(x)$ proven in Proposition~\ref{prop:asympt_exp}.

Denote the supremum norm on $[y,\infty)$ by $\|\cdot\|$. For sufficiently small $\mu$, such that $\|F-\E F_\mu\|\leq \epsilon/2$, we have,
\begin{equation}
	\|F_\mu-F\,\|\,\leq\,\|F_\mu-\E F_\mu\|\,+\,\|F-\E F_\mu\|
	\,\leq\,\|F_\mu-\E F_\mu\|\,+\,\epsilon/2.
\end{equation}
Therefore,
\begin{equation}\label{eq:prob_bound}
	\Prob\big\{\|F_\mu-F\,\|\geq\epsilon\big\}
	\,\leq\,
	\Prob\big\{\|F_\mu-\E F_\mu\|\geq\epsilon/2\big\}.
\end{equation}
At this point we use that in the grand canonical ensembles the size distribution functions are sums of independent random variables, and use a version of Kolmogorov inequality (Lemma~\ref{lem:Kolm} in the Appendix) to bound the expression on the right in equation \eqref{eq:prob_bound}:
\begin{equation}
	\Prob\big\{\|F_\mu-\E F_\mu\|\geq\epsilon/2\big\}
	\,\leq\,
	\frac{4}{\epsilon^2}\V F_\mu(y).
\end{equation}
Proposition~\ref{prop:asympt_var} in Section~\ref{sec:asympt_var} summarizes conditions required for the variance to tend to zero as $\mu\dto0$, and thus the limit shape statement is proven.
\end{proof}

\subsection{Scenario~\ref{case2}: condensation}\label{sec:discrete_example}
Theorem~\ref{thm:main} does not cover scenario~\ref{case2} discussed on p.~\pageref{case2}, i.e., when infimum in equation \eqref{eq:prt_fun_fin} is attained at some finite value(s) of $k$. The thermodynamic limit is then obtained as $\mu\dto\mu_*$, where $\eps_*$ and $\mu_*$ are defined in equations \eqref{eq:prt_fun_fin} and \eqref{eq:mustar} respectively. A {\em condensate} is formed at all states for which $E_k=k\eps_*$, each such state contributing
\begin{equation}\label{eq:finitek}
	\E k\pn_k\,\sim\,\frac{1}{\mu-\mu_*}
\end{equation}
monomers into the total expected number of monomers \eqref{eq:exp_mon}.  As earlier, without loss of generality, we may assume that $\eps_*=0$; this implies that $\mu_*=0$ and that the thermodynamic limit occurs as $\mu\dto0$.

Recall that $\pn_k$-s are independent geometric random variables, see formula \eqref{eq:independent}. This implies that whenever \mbox{$E_k=0$}, their appropriately rescaled versions converge (in distribution), as $\mu\dto0$, to independent exponential random variables:
\begin{equation}\label{eq:exp_conds}
	\Prob\big\{x<\mu\pn_k<y\big\}\,=\,
	(1-\theta_k)\sum_{x/\mu<N<y/\mu}\theta_k^N
	\,\sim\,
	\mu k\sum_{x<\mu N<y}\me^{-\mu N k}
	\,\sim\,k\int_x^y\me^{-kz}\md z.
\end{equation}
If there are only a finite number of condensate states and $\liminf_{k\to\infty}\eps_k>0$, then after this rescaling, the condensate states are the only states which survive  as $\mu\dto0$: $\mu\pn_k\to 0$ (in distribution) for all $k$ such that $E_k>0$. If there is an infinite number of condensate states, then further analysis is required. The limit behavior depends on the density of such states at infinity. We do not carry out this analysis here.

An interesting situation arises if the infimum in \eqref{eq:prt_fun_fin} is attained simultaneously at some finite $k$ and in the limit as $k$ tends to infinity. Then, as the condensate forms at some finite $k$-s, there is also accumulation of monomers at infinity. Applying Proposition~\ref{prop:asympt_mon} to states for which $E_k>0$, we can see that if $E_k=u(\ln k)\ll\ln k$ or $E_k\sim\ln k$ and $\beta<1$, the number of monomers accumulating at infinity grows faster than $1/\mu$, i.e., it dominates the number of monomers accumulating at finite $k$-s: the latter is proportional to $1/\mu$, as in formula~\eqref{eq:finitek}. In this case the limit shape scaling is required and the condensate is negligible in the limit. The situation is opposite if $E_k\sim\ln k$ and $\beta>1$, or $E_k\gg\ln k$:  the number of monomers accumulating at infinity grows slower than $1/\mu$, and therefore the condensate dominates, while the other states become negligible in the limit.

A more delicate analysis is needed if $E_k\sim\ln k$ and $\beta=1$. Let us consider an illustrative case when
\begin{equation}
	E_k\,=\,u(\ln k)\,=\,\ln k;\qquad\theta_k\,=\,\me^{-\mu k}\!/k.
\end{equation} 
Proposition~\ref{prop:asympt_mon} cannot be utilized unless $k=1$ is excluded, as in this case $E_1=0$, which violates Assumption~\ref{ass:nrg_quantum}. 
We can use the same reasoning, however, to get that total number of monomers accumulating at the states with $k>1$ is exactly the same, $1/\mu$, as the number accumulating at the $k=1$ state. Therefore, if we use scaling as in the formula \eqref{eq:exp_conds} above, the mass of the condensate converges to an exponential random variable, while the remaining states vanish as $\mu\dto0$. If, however, we use the limit shape scaling, keep $p_k$-s unaltered, and instead rescale $k$-s, then we get a limit process as described in Remark~\ref{rem:process} on p.~\pageref{rem:process}. In the limit as $\mu\dto0$, $F_\mu(x;\vv\pn)$ remain random functions converging in distribution to an inhomogeneous (backward) Poisson process. The exponential random variable corresponding to $k=1$ should be added as the atomic component at $x=0$.


\subsection{Asymptotics for the expected number of monomers}\label{eq:more_calc}
\begin{prop}[Rough asymptotics of $\E\Mon$]\label{prop:therm}
		Consider a sequence of (strictly) positive numbers, $E_k$; $k\in\N$. Depending on the asymptotic behavior of $E_k$-s as $k\to\infty$, the expected number of monomers given by formula \eqref{eq:exp_mon} has the following limits as $\mu\dto0$:
		\begin{enumerate}[label=(\roman*)]
			\item $\boldsymbol{E_k\gg\ln k}$: $\E\Mon$ is bounded for all $\beta>0$;
			\item $\boldsymbol{E_k\sim\ln k}$: $\E\Mon$ is bounded for all $\beta>2$ and tends to infinity for $\beta<2$;
			\item $\boldsymbol{E_k\ll\ln k}$: $\E\Mon\to\infty$ for all $\beta>0$.
		\end{enumerate}
\end{prop}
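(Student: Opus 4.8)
My plan is to reduce the entire proposition to a single, $\mu$-independent convergence question and then settle that by elementary power-law comparison. First I would note that for $\mu>0$ each summand in \eqref{eq:exp_mon} obeys $\frac{k}{\me^{\,\beta E_k+\mu k}-1}\leq\frac{k}{\me^{\mu k}-1}$ (using $E_k>0$), so $\E\Mon$ is finite for every $\mu>0$; moreover each summand is nonnegative and increases monotonically as $\mu\dto0$, its denominator decreasing to the strictly positive limit $\me^{\,\beta E_k}-1$. By the monotone convergence theorem,
\begin{equation}
	\lim_{\mu\dto0}\E\Mon\,=\,S\,\ass\,\sum_{k=1}^\infty\frac{k}{\me^{\,\beta E_k}-1},
\end{equation}
so that $\E\Mon$ stays bounded as $\mu\dto0$ exactly when $S<\infty$, and $\E\Mon\to\infty$ exactly when $S=\infty$. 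It therefore suffices to decide convergence of $S$ in each regime.

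Next I would compare the summand of $S$ with $k\me^{-\beta E_k}$. One always has $\frac{k}{\me^{\,\beta E_k}-1}>k\me^{-\beta E_k}$, and whenever $E_k\to\infty$ (which holds in regimes (i) and (ii)) one also has $\frac{k}{\me^{\,\beta E_k}-1}\leq 2k\me^{-\beta E_k}$ for all large $k$, since then $\me^{\,\beta E_k}\geq2$. Thus convergence of $S$ is equivalent to convergence of $\sum_k k\me^{-\beta E_k}$. In regime (i), $E_k\gg\ln k$ gives, for any fixed $C$ and all large $k$, $E_k\geq C\ln k$, whence $k\me^{-\beta E_k}\leq k^{\,1-\beta C}$; choosing $C>2/\beta$ forces the exponent below $-1$, so $S<\infty$ for every $\beta>0$. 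In regime (iii), $E_k\ll\ln k$ gives, for any fixed $\epsilon>0$ and all large $k$, $E_k\leq\epsilon\ln k$, whence $\frac{k}{\me^{\,\beta E_k}-1}>k\me^{-\beta E_k}\geq k^{\,1-\beta\epsilon}$; taking $\epsilon<1/\beta$ the terms do not even tend to zero, so $S=\infty$ for every $\beta>0$.

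The borderline regime (ii) is where the argument must be carried out carefully, and I expect it to be the main obstacle. Writing $E_k=\ln k+r_k$ with $r_k=o(\ln k)$, one has $k\me^{-\beta E_k}=k^{\,1-\beta}\me^{-\beta r_k}$, where the correction is controlled only up to $\me^{-\beta r_k}=k^{\,o(1)}$, so the naive exponent $1-\beta$ is blurred by an $o(1)$. The strictness of the thresholds is precisely what absorbs this: fixing $\delta>0$, for all large $k$ we have $|r_k|\leq\delta\ln k$, hence $k^{\,1-\beta(1+\delta)}\leq k\me^{-\beta E_k}\leq k^{\,1-\beta(1-\delta)}$. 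If $\beta>2$, pick $\delta$ with $\beta(1-\delta)>2$; the upper bound then has exponent below $-1$ and $S<\infty$. If $\beta<2$, pick $\delta$ with $\beta(1+\delta)<2$; the lower bound then has exponent above $-1$ and comparison with the corresponding $p$-series gives $S=\infty$. The harmonic borderline $\beta=2$ is left untouched by the statement, consistent with the silence of the proposition there. The only genuine care needed throughout is this handling of the subleading term $r_k$ in regime (ii), together with the verification that $E_k\to\infty$ in regimes (i) and (ii), which legitimizes the two-sided comparison of $\frac{k}{\me^{\,\beta E_k}-1}$ with $k\me^{-\beta E_k}$.
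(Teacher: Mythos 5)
Your argument is correct and follows essentially the same route as the paper: in each regime you compare the summand with a power $k^{\,s}$, using the strictness of the hypotheses to absorb the $o(\ln k)$ correction in case (ii), exactly as the paper does with its auxiliary exponent $\alpha$. The one organizational difference is that you first pass to the $\mu$-independent series $S=\sum_k k/(\me^{\,\beta E_k}-1)$ by monotone convergence and then test that single series, whereas the paper keeps the factor $\me^{\,\mu k}$ inside its bounds and lets $\mu\dto0$ at the end; your reduction is a legitimate and slightly cleaner packaging of the same estimates.
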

\begin{proof}
%
%
%
%
Consider the supercritical case (i). There exists some $K>1$, such that $\beta E_k>3\ln k$ for all $k>K$; for such $k$-s we have a bound,
\begin{equation}
	\me^{\,\beta E_k+\mu k}-\,1\,>\,
	k^3\me^{\,\mu k}-\,1\,>\, 
	k^3/\,2.
\end{equation}
Using this bound in equation \eqref{eq:exp_mon}, we get that for all $\mu>0$,
\begin{equation}\label{eq:up_bnd1}
	\E\Mon(\vv\pn)\,<\,
	\sum_{k=1}^{K-1}\frac{k}{\me^{\,\beta E_k}-\,1}\,+\,2\sum_{k=K}^\infty\frac{1}{k^2}\,<\,C.
\end{equation}
The constant $C$ does not depend on $\mu$, thus $\E\Mon$ remains bounded as $\mu\dto0$.

Consider the subcritical case (iii).  Here we can find $K$ such that for all $k>K$, $\beta E_k<2\ln k$, and we get a bound,
\begin{equation}
	\me^{\,\beta E_k+\mu k}-\,1\,<\,
	k^2\me^{\,\mu k}.
\end{equation}
This implies that
\begin{equation}
	\E\Mon(\vv\pn)\,>\,
	\sum_{k=K}^\infty\frac{\me^{-\mu k}}{k}\,\to\,\infty
	\qquad\text{as}\qquad\mu\dto0.
\end{equation}

Now consider the critical case (ii). We have, $E_k=\ln k+\delta_k$, where $\delta_k\ll\ln k$ as $k\to\infty$. Suppose $\beta>2$; then we can pick some $\alpha\in(2,\beta)$ and $K>1$, such that for all $k>K$, $\beta\delta_k>(2-\alpha)\ln k$. Therefore,
\begin{equation}
	\me^{\,\beta E_k+\mu k}-\,1\,>\,
	k^{\beta-\alpha+2}\me^{\,\mu k}-\,1\,>\, 
	k^{\beta-\alpha+2}/\,2;
\end{equation}
and we get a bound similar to \eqref{eq:up_bnd1}:
\begin{equation}
	\E\Mon(\vv\pn)\,<\,
	\sum_{k=1}^{K-1}\frac{k}{\me^{\,\beta E_k}-\,1}\,+\,2\sum_{k=K}^\infty\frac{1}{k^{1+\beta-\alpha}}\,<\,C.
\end{equation}
If $\beta<2$, we can pick $\alpha\in(\beta,2)$ and a large enough $K$, such that for all $k>K$, $\beta\delta_k<(2-\alpha)\ln k$. Then
\begin{equation}
	\me^{\,\beta E_k+\mu k}-\,1\,<\,
	k^{\beta-\alpha+2}\me^{\,\mu k};
\end{equation}
and we get,
\begin{equation}
	\E\Mon(\vv\pn)\,>\,
	\sum_{k=K}^\infty\frac{\me^{-\mu k}}{k^{1+\beta-\alpha}}\,\to\,\infty
	\qquad\text{as}\qquad\mu\dto0.
\end{equation}

Note, that if $\beta=2$, various behaviors are possible depending on the asymptotics of $\delta_k$-s, e.g., if $\delta_k\equiv0$, $\E\Mon$ diverges as $\mu\dto0$, whereas if $\delta_k=\ln\ln k$, $\E\Mon$ remains bounded. As in the classical problem regarding convergence of series, there exists no ``borderline'' asymptotic behavior of $\delta_k$-s which would separate convergent and divergent behaviors of $\E\Mon$.
\end{proof}

\begin{rem} In the cases when $\E\Mon$ remains bounded as $\mu\dto0$, the thermodynamic limit cannot be achieved in the setting of grand canonical ensembles. One must study the canonical ensembles directly (we do not carry out this study here). 
\end{rem}

We can make a stronger statement regarding the asymptotics of $\E\Mon$ if we impose additional restrictions on $E_k$-s. Here is a refinement of Proposition~\ref{prop:therm} for energies given by formula \eqref{eq:energies_u}:

\begin{prop}[Fine asymptotics of $\E\Mon$]\label{prop:asympt_mon}
	Suppose the polymer energies satisfy Assumption~\ref{ass:nrg_quantum} (p.~\pageref{ass:nrg_quantum}). Then the expected number of monomers has the following asymptotic behavior as $\mu\dto0$:
\begin{equation}\label{eq:fine_ass_mon}
	\E\Mon(\vv\pn)
	\,\sim\,
	\lambda\,\mu^{-2}\me^{-\beta u(-\ln\mu)};\qquad
	\lambda\,=\,\int_0^\infty x\,\Phi(x)\md x.
\end{equation}
The function $\Phi(\cdot)$ is determined by the asymptotic behavior of $u(x)$ as $x\to\infty$, as presented in Table~\ref{tab:scalings}. An additional condition, $\beta<2$, is required in case (iv): according to Proposition~\ref{prop:therm}, $\E\Mon$ remains finite if $\beta>2$.
\end{prop}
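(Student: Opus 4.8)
The plan is to extract the leading $\mu\dto0$ behaviour of the exact series \eqref{eq:exp_mon},
\[
	\E\Mon(\vv\pn)\,=\,\sum_{k=1}^\infty\frac{k}{\me^{\,\beta u(\ln k)+\mu k}-1},
\]
by localizing it to the window of indices that carries the mass. Since the exponent contains $\mu k$, the summand is appreciable only for $k$ of order $1/\mu$, and it varies slowly on the integer scale there. I would first replace the sum by the integral $\int_1^\infty k\,(\me^{\beta u(\ln k)+\mu k}-1)^{-1}\md k$: the summand has its entire variation on the scale $1/\mu$ and its maximum is of order $\mu^{-1}\me^{-\beta u(-\ln\mu)}$, so the comparison error is $O(\mu^{-1}\me^{-\beta u(-\ln\mu)})$, negligible against the claimed leading term $\mu^{-2}\me^{-\beta u(-\ln\mu)}$.

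Next I would substitute $k=x/\mu$ (so $x=\mu k$) and set $L=-\ln\mu$, giving $\ln k=L+\ln x$ and turning the integral into
\[
	\frac{\me^{-\beta u(L)}}{\mu^2}\int_\mu^\infty
	\frac{x\,\md x}{\me^{\,\beta[u(L+\ln x)-u(L)]+x}-\me^{-\beta u(L)}}.
\]
This isolates exactly the prefactor $\mu^{-2}\me^{-\beta u(-\ln\mu)}$ of \eqref{eq:fine_ass_mon}, reducing the claim to showing the integral converges to $\lambda=\int_0^\infty x\Phi(x)\md x$. For the pointwise limit of the integrand I use Assumption~\ref{ass:nrg_quantum}: by the mean value theorem $u(L+\ln x)-u(L)=u'(\xi)\ln x$ with $\xi$ between $L$ and $L+\ln x$, and since $\lim u'=c$ exists this tends to $c\ln x$, whence $\me^{\beta[u(L+\ln x)-u(L)]}\to x^{\beta c}$. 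The identity of $\Phi$ is then dictated by $c\in\{0,1\}$ and by whether $u(L)\to\infty$ (so the subtracted $\me^{-\beta u(L)}$ drops) or $u(L)\to E_\infty<\infty$ (so it survives): when $u\to\infty$ one gets $\Phi(x)=x^{-\beta c}\me^{-x}$, i.e. $\lambda=\Gfun(2-\beta c)$ — equal to $1$ in the subcritical regime $c=0$ and to $\Gfun(2-\beta)$ in the critical regime $c=1$, finite precisely when $\beta<2$ (case (iv)); when $u\to E_\infty$ one keeps the constant and obtains $\Phi(x)=(\me^x-\me^{-\beta E_\infty})^{-1}$, giving the dilogarithm $\lambda=\me^{\beta E_\infty}\Li_2(\me^{-\beta E_\infty})$, which reduces to $\zeta(2)=\pi^2/6$ in the decay regime $E_\infty=0$. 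These are the four entries of Table~\ref{tab:scalings}.

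The main obstacle is upgrading this pointwise convergence to convergence of the integrals, i.e. dominated convergence uniform in small $\mu$, since the linearization $u(L+\ln x)-u(L)\approx c\ln x$ holds uniformly only on compact $x$-sets. I would split the range: for large $x$ the factor $\me^x$ together with $u>0$ gives exponential decay and a trivial majorant; on a fixed $[x_0,X]$ convergence is uniform; and the contribution of the shrinking strip $(\mu,x_0)$ adjacent to the lower cutoff — where $L+\ln x$ descends into the region in which $u'$ is not yet near $c$ — is controlled by bounding the denominator crudely from below (using $u>0$ and $\me^{-\beta u(L)}\to0$), which keeps the integrand bounded on a finite interval. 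The genuinely delicate regime is the critical one, $c=1$: the limiting density $x\Phi(x)=x^{1-\beta}\me^{-x}$ is singular at the origin and integrable exactly for $\beta<2$, and the matching majorant $x^{1-\beta c+\beta\delta}\me^{-x}$ on $[x_0,\infty)$, obtained from the eventual monotonicity bound $u(L+\ln x)-u(L)\ge(c-\delta)\ln x$, is integrable under the same condition. Assembling these pieces into a single dominating function valid for all small $\mu$ in each regime is the technical heart of the argument; once it is in place, dominated convergence yields \eqref{eq:fine_ass_mon} with $\lambda$ as tabulated, the restriction $\beta<2$ in case (iv) being exactly this integrability threshold, consistent with Proposition~\ref{prop:therm}.
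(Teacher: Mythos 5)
Your proposal is correct and follows essentially the same route as the paper: factor out $\mu^{-2}\me^{-\beta u(-\ln\mu)}$, rescale via $x=\mu k$, obtain the pointwise limit $u(-\ln\mu+\ln x)-u(-\ln\mu)\to\ln x\cdot\lim u^\prime$ by the mean value theorem (the paper's Lemma~\ref{lem:infinity}), and conclude by dominated convergence with the four regimes of Table~\ref{tab:scalings}, including the $\beta<2$ integrability threshold in case (iv). The only cosmetic difference is that the paper converts the sum into an integral of a piecewise-constant function exactly (avoiding your Euler--Maclaurin-type error estimate), while you are rather more explicit than the paper about constructing the dominating function, a step the paper dismisses as straightforward.
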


\begin{table}[t] 
\begin{center}
\begin{tabular}{rccclll}
	&\rule{0.25em}{0pt}&		$\displaystyle\lim_{x\to\infty} u(x)$	&	$\displaystyle\lim_{x\to\infty} u^\prime(x)$	&	Typical $E_k=u(\ln k)$		&	$\Phi(x)$	&$\lambda$\\
	\cmidrule[\heavyrulewidth]{3-7}
	i&&		$0$		&	$0$			&	$1/\,k^\alpha$,\quad$\alpha>0$	& 	$1/(\me^x-1)$				&	$\pi^2\!/\,6$\\	
	ii&&		$1$		&	$0$			&	$1$		&	$1/(\me^{x}-\me^{-\beta})$			&	$\me^{\,\beta}\Li_2\big(\me^{-\beta}\big)$\\
	iii&&		$\infty$	&	$0$			&	$\ln\ln k$	& 	$\me^{-x}$						&	$1$\\			
	iv&&		$\infty$	&	$1$			&	$\ln k$		&	$x^{-\beta}\me^{-x}$				&	$\Gfun(2-\beta)$\\
	\cmidrule[\lightrulewidth]{3-7}
\end{tabular}
\end{center}
\caption{\label{tab:scalings}
	Possible asymptotic behaviors of the function $u(\cdot)$ prescribing internal energies of polymers via relation \eqref{eq:energies_u}. The function $\Phi(x)$ and $\lambda$ are related to the asymptotic behavior of the expected number of monomers in the system as asserted in Proposition~\ref{prop:asympt_mon}.  If $u^\prime(x)$ tends to infinity as $x\to\infty$, according to Proposition~\ref{prop:therm}, $\E\Mon$ remains bounded as $\mu\dto0$, i.e., there is no thermodynamic limit.
}
\end{table}

\begin{proof}
Substituting $E_k=u(\ln k)$ into equation \eqref{eq:exp_mon}, we get,
\begin{equation}
	\E\Mon(\vv\pn)\,=\,
	\sum_{k=1}^\infty\frac{k}{\me^{\,\beta u(\ln k)+\mu k}-\,1}
	\,=\,
	\frac{1}{\mu^2\me^{\,\beta u(-\ln\mu)}}\sum_{k=1}^\infty\frac{\mu k}{\me^{\,\beta[u(\ln\mu k-\ln\mu)-u(-\ln\mu)]+\mu k}\,-\,\me^{-\beta u(-\ln\mu)}}\,\mu.
\end{equation}
The sum in the formula above may be transformed into an integral, yielding
\begin{equation}
	\mu^2\me^{\,\beta u(-\ln\mu)}\E\Mon(\vv\pn)\,=\,
	\int_0^\infty\Psi_\mu(x)\md x,
\end{equation}
where the function $\Psi_\mu(x)$ is piece-wise constant with values
\begin{equation}
	\Psi_\mu(x)\,=\,
	\frac{\mu k}{\me^{\,\beta[u(\ln \mu k-\ln\mu)-u(-\ln\mu)]+\mu k}\,-\,\me^{-\beta u(-\ln\mu)}}
\end{equation}
for $x\in[\mu(k-1);\mu k)$; $k\in\N$. Fix some $x>0$; as $\mu\dto0$, $k$ in the formula above is selected so that $|\mu k-x|\leq\mu\to0$, i.e., $\mu k\to x$. Using Lemma~\ref{lem:infinity} from the Appendix, we get that for any given $x$, 
\begin{equation}
	\lim_{\mu\dto0}\big[u(\ln \mu k-\ln\mu)-u(-\ln\mu)\big]\,=\,\lim_{t\to\infty} u^\prime(t)\,\ln x
\end{equation}
Thus, as $\mu\dto0$, $\Psi_\mu(x)$ converges pointwise to $x\Phi(x)$, where $\Phi(x)$ is one of the functions presented in Table~\ref{tab:scalings}; it depends on the behavior of $u(\cdot)$ and $u^\prime(\cdot)$ at infinity. It is straightforward to verify that $\Psi_\mu(x)<C\,x\Phi(x)$ for all small enough $\mu$ and a suitably chosen constant $C$; therefore, by the dominated convergence theorem we get the desired result. The condition $\beta<2$ is needed in case (iv) so that $x\Phi(x)$ is integrable near zero.
\end{proof}


\subsection{Asymptotics of the expected value of the size distribution function}\label{sec:asympt_exp}
\begin{prop}\label{prop:asympt_exp}
	Suppose the polymer energies satisfy Assumption~\ref{ass:nrg_quantum} (p.~\pageref{ass:nrg_quantum}). Then for any $x>0$, the expectation of the scaled size distribution function \eqref{eq:scale_sdf} has the following asymptotic behavior as $\mu\dto0$:
	\begin{equation}
		F(x)\,\ass\,
		\lim_{\mu\dto0}\E F_\mu(x;\vv\pn)\,=\,
		\frac{1}{\lambda}\int_x^\infty\,\Phi(y)\md y.
	\end{equation}
	The function $\Phi(\cdot)$ and the constant $\lambda$ are as presented in Table~\ref{tab:scalings}.
\end{prop}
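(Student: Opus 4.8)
The plan is to reduce the claim to the asymptotics of $\E\Mon$ already established in Proposition~\ref{prop:asympt_mon} and to recycle, almost verbatim, the Riemann-sum-plus-dominated-convergence scheme used in its proof. Writing out $\E F_\mu(x)$ from \eqref{eq:scale_sdf} and using $\E\pn_k=1/(\me^{\beta E_k+\mu k}-1)$ (cf.\ \eqref{eq:exp_mon}), the object to analyze is a ratio whose denominator is $\mu\E\Mon$ and whose numerator is the tail sum $\sum_{k\geq x/\mu}\E\pn_k$. Proposition~\ref{prop:asympt_mon} already supplies $\mu\E\Mon\sim\lambda\,\mu^{-1}\me^{-\beta u(-\ln\mu)}$, so the only new work is to extract the leading asymptotics of the numerator.

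For the numerator I would factor out $\me^{-\beta u(-\ln\mu)}$ exactly as before: substituting $E_k=u(\ln k)$ and multiplying numerator and denominator by $\me^{-\beta u(-\ln\mu)}$ turns the tail sum into
\begin{equation}
	\sum_{k\geq x/\mu}\frac{1}{\me^{\beta u(\ln k)+\mu k}-1}
	\,=\,
	\frac{\me^{-\beta u(-\ln\mu)}}{\mu}\int_{x}^\infty \Phi_\mu(y)\,\md y,
\end{equation}
up to a single boundary cell of width $\mu$ at the lower endpoint, which is immaterial in the limit. Here $\Phi_\mu$ is the piecewise-constant function equal to $\bigl(\me^{\beta[u(\ln\mu k-\ln\mu)-u(-\ln\mu)]+\mu k}-\me^{-\beta u(-\ln\mu)}\bigr)^{-1}$ on $[\mu(k-1),\mu k)$ --- the same integrand as in Proposition~\ref{prop:asympt_mon}, but without the extra factor $\mu k$. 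By Lemma~\ref{lem:infinity}, for each fixed $y>0$ one has $u(\ln\mu k-\ln\mu)-u(-\ln\mu)\to(\lim_{t\to\infty}u^\prime(t))\ln y$ as $\mu\dto0$ with $\mu k\to y$, so $\Phi_\mu(y)\to\Phi(y)$ pointwise, with $\Phi$ read off from Table~\ref{tab:scalings}.

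Convergence of the integral then follows from dominated convergence, with a domination $\Phi_\mu(y)\leq C\,\Phi(y)$ obtained exactly as in the earlier proof. I expect this step to be \emph{easier} than in Proposition~\ref{prop:asympt_mon}: since the integral starts at a fixed $x>0$ rather than at $0$, there is no singularity at the lower endpoint, and $\Phi$ is integrable on $[x,\infty)$ in all four cases (it decays at least like $\me^{-y}$). In particular, the delicate near-zero integrability that forced the restriction $\beta<2$ in case (iv) plays no role in the numerator; that restriction enters only through the denominator, via finiteness of $\lambda$ in Proposition~\ref{prop:asympt_mon}. Hence $\int_x^\infty\Phi_\mu(y)\,\md y\to\int_x^\infty\Phi(y)\,\md y$.

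Assembling the two pieces, the factors $\me^{-\beta u(-\ln\mu)}/\mu$ in the numerator and $\lambda\,\mu^{-1}\me^{-\beta u(-\ln\mu)}$ in $\mu\E\Mon$ cancel exactly, leaving $\E F_\mu(x)\to\lambda^{-1}\int_x^\infty\Phi(y)\,\md y$, as claimed. The only genuinely load-bearing point is this exact cancellation of the $\mu$- and $\beta$-dependent prefactors: it is what makes the limit shape finite and independent of the awkward normalization, and it is precisely why the scaling chosen in \eqref{eq:scale_sdf} is the natural one. Everything else is a routine repetition of the dominated-convergence argument already validated for $\E\Mon$.
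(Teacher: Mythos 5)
Your proposal is correct and follows essentially the same route as the paper: the tail sum is rewritten as a Riemann sum for $\int\Phi_\mu$, pointwise convergence $\Phi_\mu\to\Phi$ is obtained from Lemma~\ref{lem:infinity}, dominated convergence gives the integral limit, and the prefactors cancel against the asymptotics of $\mu\E\Mon$ from Proposition~\ref{prop:asympt_mon}. The paper merely handles the lower endpoint slightly more carefully (integrating from $\underaccent{\tilde}{x}=\mu\floor{x/\mu}$), and your observation that the restriction $\beta<2$ in case (iv) enters only through $\lambda$ matches the paper's closing remark about when $x=0$ may be allowed.
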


\begin{proof}
Recall that $\pn_k$-s are geometric random variables with parameters $\theta_k$ \eqref{eq:independent}, thus we have,
\begin{equation}\label{eq:exp_f}
	\sum_{k\geq x/\mu}\E\pn_k\,=\,
	\sum_{k\geq x/\mu}\frac{\theta_k}{1-\theta_k}\,=\,
	\sum_{k\geq x/\mu}\frac{1}{\me^{\,\beta E_k+\mu k}-\,1}.
\end{equation}
After some rearrangement, using that $E_k=u(\ln k)$, we continue:
\begin{equation}\label{eq:some1}
	\cdots
	\,=\,
	\frac{1}{\mu\me^{\,\beta u(-\ln\mu)}}\sum_{k\geq x/\mu}\frac{1}{\me^{\,\beta[u(\ln\mu k-\ln\mu)-u(-\ln\mu)]+\mu k}\,-\,\me^{-\beta u(-\ln\mu)}}\,\mu
	\,=\,
	\frac{1}{\mu\me^{\,\beta u(-\ln\mu)}}\int_{\underaccent{\tilde}{x}}^\infty\Phi_\mu(y)\md y,
\end{equation}
where $\underaccent{\tilde}{x}=\mu\floor{x/\mu}$ (the {\em floor} of $x$, $\floor{x}$, is the greatest integer smaller than or equal to $x$) and the function $\Phi_\mu(\cdot)$ is defined similarly to how we defined $\Psi_\mu(\cdot)$ in the proof of Proposition~\ref{prop:asympt_mon}, p.~\pageref{prop:asympt_mon}, except without the factor of $\mu k$. Combining formula \eqref{eq:some1} with asymptotics for $\E\Mon$ \eqref{eq:fine_ass_mon}, we get that as $\mu\dto0$,
\begin{equation}
	\E F_\mu(x;\vv\pn)\,=\,\frac{1}{\mu\E\Mon}\sum_{k\geq x/\mu}\E\pn_k\,\sim\,
	\frac{1}{\lambda}\int_0^\infty\,\id_{(\underaccent{\tilde}{x},\infty)}(y)\Phi_\mu(y)\md y.
\end{equation}
The function $\id_{(\underaccent{\tilde}{x},\infty)}(y)\Phi_\mu(y)$ converges pointwise to $\id_{(x,\infty)}(y)\Phi(y)$ and is bounded by $C\,\Phi(y)$ for a suitably chosen constant $C$, so as earlier, we get the desired result by the dominated convergence theorem. Note that $x$ may also be set to 0 in cases (ii, iii), and (iv) when $\beta<1$, see Table~\ref{tab:scalings}, as then $\Phi(\cdot)$ is integrable over $[0,\infty)$.
\end{proof}

%

\subsection{Asymptotics of the variance of the size distribution function} \label{sec:asympt_var}
\begin{prop}\label{prop:asympt_var}
	Suppose the polymer energies satisfy Assumption~\ref{ass:nrg_quantum} (p.~\pageref{ass:nrg_quantum}). Then, depending on the asymptotic behavior of $u(\cdot)$ at infinity, see Table~\ref{tab:scalings}, for any $x>0$, the variance of the scaled size distribution function \eqref{eq:scale_sdf} has the following asymptotic behaviors as $\mu\dto0$:
	\begin{equation}
		\lim_{\mu\dto0}\V F_\mu(x;\vv\pn)\,=\,0\quad\text{cases (i-iii) and (iv) when }\beta<1;\qquad
		\lim_{\mu\dto0}\V F_\mu(x;\vv\pn)\,=\,\infty\quad\text{case (iv) when }\beta>1.
	\end{equation}
\end{prop}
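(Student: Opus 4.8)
The plan is to exploit, as in the proofs of Propositions~\ref{prop:asympt_mon} and~\ref{prop:asympt_exp}, that in the grand canonical ensemble the numbers of polymers $\pn_k$ are \emph{independent}, so that the variance of the scaled size distribution function is the sum of the individual variances:
\begin{equation}
	\V F_\mu(x;\vv\pn)\,=\,\frac{1}{(\mu\E\Mon)^2}\sum_{k\geq x/\mu}\V\pn_k.
\end{equation}
Each $\pn_k$ is geometric with parameter $\theta_k=\me^{-\beta E_k-\mu k}$, so a direct computation gives $\V\pn_k=\theta_k/(1-\theta_k)^2=\me^{\beta E_k+\mu k}/(\me^{\beta E_k+\mu k}-1)^2$, equivalently $\V\pn_k=\E\pn_k(1+\E\pn_k)$. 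This last identity is the key structural fact: wherever the per-term mean is small the variance matches the mean, whereas where the mean is of order one so is the variance.

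First I would recycle the rescaling machinery of Proposition~\ref{prop:asympt_mon}. Writing $E_k=u(\ln k)$, substituting $y=\mu k$, and factoring out $A_\mu\ass\me^{\beta u(-\ln\mu)}$, one has $\me^{\beta E_k+\mu k}=A_\mu G_\mu(y)$, where (by Lemma~\ref{lem:infinity}) $G_\mu(y)$ converges pointwise to the limiting expression encoded in $\Phi$, exactly as in the earlier proofs. Converting the sum into an integral with spacing $\Delta y=\mu$ reduces the problem to estimating $\sum_{k\geq x/\mu}\V\pn_k$ and combining it with the asymptotic $\mu\E\Mon\sim\lambda\,\mu^{-1}A_\mu^{-1}$ coming from \eqref{eq:fine_ass_mon}, i.e. $(\mu\E\Mon)^{-2}\sim A_\mu^2\mu^2/\lambda^2$.

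The analysis then splits according to whether $A_\mu$ stays bounded. In cases (i) and (ii) the limit of $u$ is finite, so $A_\mu$ is bounded and $A_\mu G_\mu(y)$ converges to a finite $S_0(y)>1$ for every $y>0$; hence $\V\pn_k$ is of order one and $\sum_{k\geq x/\mu}\V\pn_k\sim C_x/\mu$ with $C_x=\int_x^\infty S_0(y)/(S_0(y)-1)^2\,\md y<\infty$. Combined with $(\mu\E\Mon)^{-2}\sim A_\mu^2\mu^2/\lambda^2$ this gives $\V F_\mu\sim A_\mu^2\mu\,C_x/\lambda^2\to0$. In cases (iii) and (iv) the limit of $u$ is infinite, so $A_\mu\to\infty$ and the per-term mean $\E\pn_k\to0$ for each fixed $y>0$; thus $\V\pn_k\sim\E\pn_k$ and $\sum_{k\geq x/\mu}\V\pn_k\sim(\mu A_\mu)^{-1}\int_x^\infty\Phi(y)\,\md y$, whence $\V F_\mu\sim(A_\mu\mu/\lambda^2)\int_x^\infty\Phi(y)\,\md y$. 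The whole behaviour is now governed by the prefactor $A_\mu\mu=\mu\,\me^{\beta u(-\ln\mu)}$: in case (iii), $u(t)=\ln t$ gives $A_\mu\mu=\mu(-\ln\mu)^\beta\to0$; in case (iv), $u(t)=t$ gives $A_\mu\mu=\mu^{1-\beta}$, which tends to $0$ for $\beta<1$ and to $\infty$ for $\beta>1$. This reproduces the claimed dichotomy.

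The main obstacle I anticipate is the rigorous passage to the limit under the integral sign in each regime, i.e. exhibiting integrable $\mu$-independent dominating functions for the piecewise-constant integrands representing $\sum_{k\geq x/\mu}\V\pn_k$; for $y>0$ these are controlled by a constant multiple of $S_0(y)/(S_0(y)-1)^2$ in cases (i,\,ii) and of $\Phi(y)$ in cases (iii,\,iv), the singularity at $y=0$ being harmless since $x>0$, so dominated convergence applies just as in Propositions~\ref{prop:asympt_mon} and~\ref{prop:asympt_exp}. For the divergent subcase (case (iv) with $\beta>1$) dominated convergence must instead be replaced by a Fatou-type lower bound to turn the heuristic $\V F_\mu\sim\mu^{1-\beta}C$ into a genuine proof that $\V F_\mu\to\infty$; here one also keeps $\beta<2$, since by Proposition~\ref{prop:therm} there is no thermodynamic limit for $\beta>2$.
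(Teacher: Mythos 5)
Your proposal is correct and follows essentially the same route as the paper: independence of the $\pn_k$ reduces $\V F_\mu$ to a sum of geometric variances, which is rescaled into an integral of a piecewise-constant function converging (by Lemma~\ref{lem:infinity} and dominated convergence, for $x>0$) to a finite limit, so that the dichotomy is read off from the prefactor $\mu\,\me^{\beta u(-\ln\mu)}$, giving $\mu^{1-\beta}$ in case (iv). Your reorganization via $\V\pn_k=\E\pn_k(1+\E\pn_k)$ and the boundedness of $A_\mu$ is a cosmetic variant of the paper's uniform treatment through the limits of $\Upsilon_\mu$, and the remarks on domination and the Fatou-type bound for the divergent subcase are sound.
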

\begin{proof}
Recall that $p_k$-s are independent geometric random variables with parameters $\theta_k$ \eqref{eq:independent}, thus
\begin{align}\label{eq:var_f}
	\V\sum_{k\geq x/\mu}\pn_k\,=\,
	\sum_{k\geq x/\mu}\V\pn_k\,=\,
	\sum_{k\geq x/\mu}\frac{\theta_k}{(1-\theta_k)^2}\,&=\,
	\sum_{k\geq x/\mu}\frac{1}{\me^{\,\beta E_k+\mu k}-\,2\,+\,\me^{-\beta E_k-\mu k}}.
\end{align}
Proceeding in the same way as in the proof of Proposition~\ref{prop:asympt_exp}, we get
\begin{equation}
	\cdots
	\,=\,
	\frac{1}{\mu\me^{\,\beta u(-\ln\mu)}}\sum_{k\geq x/\mu}\frac{1}{\me^{\,\beta[u(\ln\mu k-\ln\mu)-u(-\ln\mu)]+\mu k}\,-\,2\me^{-\beta u(-\ln\mu)}\,+\,\me^{-\beta[u(\ln k)+u(-\ln\mu)]-\mu k}}\,\mu,
\end{equation}
which may be converted into an integral of a suitable piecewise constant function, $\Upsilon_\mu(\cdot)$. Skipping the details, which are identical to our previous derivations, we get,
\begin{equation}\label{eq:var_scaling}
	\V F_\mu(x;\vv\pn)\,=\,\frac{1}{(\mu\E\Mon)^2}\V\sum_{k\geq x/\mu}\pn_k\,\sim\,
	\frac{\mu\me^{\,\beta u(-\ln\mu)}}{\lambda^2}\int_0^\infty\,\id_{(\underaccent{\tilde}{x},\infty)}(y)\Upsilon_\mu(y)\md y.
\end{equation}
Depending on the asymptotics of $u(\cdot)$ and $u^\prime(\cdot)$ at infinity (as in Table~\ref{tab:scalings}), $\Upsilon_\mu(y)$ converges pointwise to one of the following functions:
\begin{equation}
	\text{(i)}\quad\frac{1}{2(\cosh y-1)};\qquad\text{(ii)}\quad
	\frac{\me^{\,\beta}}{2\big(\cosh(y+\beta)-1\big)};\qquad\text{(iii)}\quad\me^{-y};\qquad\text{(iv)}\quad y^{-\beta}\me^{-y}.
\end{equation}
Thus the integral in formula \eqref{eq:var_scaling} is finite as long as $x>0$; moreover, $x=0$ may also be included in cases (ii, iii) and in case (iv) as long as $\beta<1$. Variance of the scaled size distribution functions tends to 0 when $\mu\dto0$, whenever \smash{$\mu\me^{\,\beta u(-\ln\mu)}$} does. This is the case if $u(\cdot)$ is finite or grows sublinearly at infinity. In case of linear growth, as long as $\beta<1$, the variance converges to zero; if $\beta>1$ it tends to infinity.
\end{proof}
\begin{rem}\label{rem:process}
In the proof of Proposition~\ref{prop:asympt_var} above, one can see that if $u(x)\sim x$ as $x\to\infty$ and \mbox{$\beta=1$}, the variance of the scaled size distribution function has a nonzero limit as $\mu\dto0$. In this case there is no limit shape and the scaled size distribution functions remain random in the limit. A detailed analysis of this scenario could be a subject of a different study, but let us sketch a calculation for one particular case when $u(x)=x$, i.e., $E_k=\ln k$. Thus $\pn_k$-s are independent geometric random variables with parameters \smash{$\theta_k=\me^{-\mu k}\!/k$}. Proposition~\ref{prop:asympt_mon} cannot be used for $k=1$, as in this case $u(\ln 1)=0$, which violates Assumption~\ref{ass:nrg_quantum}. In fact, there is {\em condensation} at $k=1$, see Section~\ref{sec:discrete_example}. We can still use Proposition~\ref{prop:asympt_mon} for $k>1$ to get that as $\mu\dto0$ (assuming that the $k=1$ state is excluded)
\begin{equation}
	\E\Mon(\vv\pn)\,\sim\,\frac{1}{\mu};\qquad
	F_\mu(x;\vv\pn)\,\sim\,\sum_{k\geq x/\mu}\pn_k.
\end{equation}
Observe that the scaled size distribution functions remain integer-valued in this limit. Let us calculate the characteristic function of the difference, $F_\mu(x;\vv\pn)-F_\mu(y;\vv\pn)$ for some $x$ and $y$ such that $0<x<y$:
\begin{equation}
	\phi(x,y;t)\,\ass\,
	\E\exp\left\{\mi t\sum_{x/\mu\leq k<y/\mu}\pn_k\right\}\,=\,
	\prod_{x/\mu\leq k< y/\mu}\E\me^{\,\mi t\pn_k}\,=\,
	\prod_{x/\mu\leq k< y/\mu}\frac{1-\me^{-\mu k}\!/k}{1-\me^{\,\mi t -\mu k}\!/k}.
\end{equation}
The logarithm of $\phi(x,y;t)$ may be converted into an integral as in the proofs of Propositions~\ref{prop:asympt_mon}-\ref{prop:asympt_var}, and we can get that as $\mu\dto0$,
\begin{equation}
	\ln\phi(x,y;t)\,\sim\,
	\lambda(x,y)
	\left(
	\me^{\,\mi t}\,-\,1
	\right);\qquad
	\lambda(x,y)\,=\,\int_x^y\frac{\me^{-z}}{z}\md z.
\end{equation}
This is the characteristic function of a Poisson process with parameter $\lambda(x,y)$. Therefore, as $\mu\dto0$, the scaled size distribution functions converge in distribution to a (backward in $x$) Poisson process which starts at zero when ``$x=\infty$,'' tends to infinity (almost surely) when $x\dto0$, and whose jumps are distributed with density $\me^{-x}\!/x$. 
\end{rem}

\section{Acknowledgement}
The authors are grateful to the referees for valuable suggestions and references. I.F. acknowledges support by the NSF grant DMS-1056471. V.S. acknowledges support by the Leverhulme Research Grant, RPG-2014-226.


\newpage

\section{Technical lemmas}\label{sec:lemmas}

\begin{lemma}[A version of Kolmogorov's inequality]\label{lem:Kolm}
	Let $x_n$, $n\in\N$ be independent random variables, such that
	\begin{equation}
		\sum_{n=1}^\infty\E x_n<\infty,
		\qquad\qquad
		\sum_{n=1}^\infty\V x_n<\infty.
	\end{equation}
	Set $S_N\ass\sum_{n=N}^\infty x_n$. Then for any $N$, $S_N$ exists almost surely and
	\begin{equation}
		\Prob\bigg\{\sup_{n\geq N}|S_n-\E S_n|\geq\epsilon\bigg\}
		\,\leq\,
		\frac{1}{\epsilon^2}\sum_{n=N}^\infty\V x_n
		\,=\,
		\frac{\V S_N}{\epsilon^2}.
	\end{equation}
\end{lemma}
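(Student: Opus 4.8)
The plan is to reduce the statement to the classical (finite) Kolmogorov maximal inequality by exploiting the backward, tail-sum structure of the $S_n$. First I would dispose of the existence claim: since $\sum_n\E x_n$ and $\sum_n\V x_n$ both converge, Kolmogorov's two-series theorem guarantees that $\sum_{k=N}^\infty x_k$ converges almost surely, so $S_N$ is well defined a.s.\ and lies in $L^2$ with $\E S_N=\sum_{k\geq N}\E x_k$ and $\V S_N=\sum_{k\geq N}\V x_k$. Centering, set $y_k:=x_k-\E x_k$ and $T_n:=S_n-\E S_n=\sum_{k\geq n}y_k$, so that the event to be controlled is $\{\sup_{n\geq N}|T_n|\geq\epsilon\}$ and the target bound is $\epsilon^{-2}\,\E T_N^2$.

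The key structural observation is that, relative to the \emph{decreasing} family of tail $\sigma$-algebras $\mathcal G_n:=\sigma(y_n,y_{n+1},\dots)$, the centered tails $(T_n)_{n\geq N}$ form a reverse martingale: writing $T_n=y_n+T_{n+1}$ with $y_n$ independent of $\mathcal G_{n+1}$ and $\E y_n=0$ gives $\E[T_n\mid\mathcal G_{n+1}]=T_{n+1}$. On any finite window $N\leq n\leq L$ this becomes an ordinary forward martingale under the relabeling $\hat M_j:=T_{L-j}$, $\hat{\mathcal F}_j:=\mathcal G_{L-j}$, $0\leq j\leq L-N$: the filtration $\hat{\mathcal F}_j$ is increasing, $\E[\hat M_{j+1}\mid\hat{\mathcal F}_j]=\hat M_j$, and $\max_{0\leq j\leq L-N}|\hat M_j|=\max_{N\leq n\leq L}|T_n|$. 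Applying the classical Kolmogorov/Doob $L^2$ maximal inequality to $(\hat M_j)$ then yields, for every finite $L\geq N$,
\[
\Prob\Big\{\max_{N\leq n\leq L}|T_n|\geq\epsilon\Big\}\,\leq\,\frac{\E\hat M_{L-N}^2}{\epsilon^2}\,=\,\frac{\E T_N^2}{\epsilon^2}\,=\,\frac{1}{\epsilon^2}\sum_{k=N}^\infty\V x_k .
\]
Crucially the right-hand side is the second moment of $T_N$, the last term of the relabeled martingale, and hence does not depend on $L$.

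Finally I would pass to the supremum by monotone continuity of measure: for any $\epsilon'<\epsilon$ the events $\{\max_{N\leq n\leq L}|T_n|>\epsilon'\}$ increase with $L$ to $\{\sup_{n\geq N}|T_n|>\epsilon'\}$, so letting $L\to\infty$ in the uniform bound above and then $\epsilon'\uparrow\epsilon$ gives $\Prob\{\sup_{n\geq N}|T_n|\geq\epsilon\}\leq\epsilon^{-2}\sum_{k\geq N}\V x_k$, which is the claim after undoing the centering. The main obstacle is precisely the treatment of the \emph{infinite} tail: a naive truncation of each $T_n$ at level $M$ would leave a remainder $\sum_{k>M}y_k$ common to all $T_n$, blocking a direct appeal to the finite inequality. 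The reverse-martingale relabeling is exactly what absorbs this remainder into the (random, $\hat{\mathcal F}_0$-measurable) initial value $\hat M_0=T_L$ of the forward martingale, so that the full infinite tails are handled on each finite window by a bound that does not deteriorate as the window grows; the subsequent limit in $L$ is then routine.
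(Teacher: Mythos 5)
Your proof is correct and follows essentially the same route as the paper's: Two-Series theorem for existence, centering, and then absorbing the common infinite remainder $T_L=\bar S_L$ into the initial term of a finite forward martingale so that the classical maximal inequality gives a bound $\V S_N/\epsilon^2$ independent of the window, followed by a limit in $L$. The only cosmetic differences are that you phrase the key decomposition $\bar S_n=\sum_{k=n}^{L-1}\bar x_k+\bar S_L$ in reverse-martingale language and dispose of the ``supremum attained at infinity'' issue via the $\epsilon'\uparrow\epsilon$ trick rather than via $\bar S_n\to0$ almost surely.
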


\begin{proof}
The almost sure existence of $S_N$ is guaranteed by the Two-Series theorem \cite{shiryaev1996probability}. To simplify notation, introduce the centered variables, $\bar x_n\ass x_n-\E x_n$; $\bar S_n\ass S_n-\E S_n$. Introduce events that $|\bar S_n|$ exceeds $\epsilon$ for some $n<M$, i.e.,  a sequence of increasing sets,
\begin{equation}
	\set A_M\,\ass\,\bigg\{\max_{N\leq n<M}|\bar S_n|\geq\epsilon\bigg\}
	\,=\,\bigg\{\max_{N\leq n<M}\bigg|\sum_{k=n}^{M-1}\bar x_k\,+\,\bar S_M\bigg|\geq\epsilon\bigg\}.
\end{equation}
The supremum of $|\bar S_n|$ is either reached at some finite $n$, or in the limit as $n$ tends to infinity, i.e.,
\begin{equation}
	\bigg\{\sup_{n\geq N}|\bar S_n|\geq\epsilon\bigg\}
	\,=\,
	\bigg\{\limsup_{n\to\infty}|\bar S_n|\geq\epsilon\bigg\}
	\cup\bigcup_{M>N}\set A_M.
\end{equation}
As $\lim_{n\to\infty}\bar S_n=0$ almost surely, $\Prob\Big\{\limsup_{n\to\infty}|\bar S_n|\geq\epsilon\Big\}=0$; therefore,
\begin{equation}
	\Prob\bigg\{\sup_{n\geq N}|\bar S_n|\geq\epsilon\bigg\}
	\,=\,
	\Prob\bigcup_{M>N}\set A_M
	\,=\,
	\lim_{M\to\infty}\Prob\set A_M.
\end{equation}
Thus we get,
\begin{equation}
	\Prob\bigg\{\sup_{n\geq N}|\bar S_n|\geq\epsilon\bigg\}
	\,=\,
	\lim_{M\to\infty}\Prob\bigg\{\max_{N\leq n<M}\bigg|\sum_{k=n}^{M-1}\bar x_k\,+\,\bar S_M\bigg|\geq\epsilon\bigg\}.
\end{equation}
Now we can use the standard Kolmogorov inequality to obtain the desired result:
\begin{equation}
	\Prob\bigg\{\sup_{n\geq N}|\bar S_n|\geq\epsilon\bigg\}
	\,\leq\,\limsup_{M\to\infty}
	\frac{1}{\epsilon^2}
	\bigg(\sum_{n=N}^{M-1}\V\bar x_n\,+\,\V\bar S_M\bigg)
	\,=\,\frac{1}{\epsilon^2}\sum_{n=N}^\infty\V \bar x_n
	\,=\,
	\frac{\V \bar S_N}{\epsilon^2}.
\end{equation}
\end{proof}

\begin{lemma}\label{lem:infinity}
Let $u(t)$ be a differentiable function such that\, $\lim_{t\to\infty}u^\prime(t)=\sigma$; let $\delta(t)\to\Delta$ as $t\to\infty$. Then
	\begin{equation}\label{eq:lim_fun}
		\lim_{t\to\infty}\big[u(t+\delta(t))-u(t)\big]\,=\,\sigma\Delta.
	\end{equation}
\end{lemma}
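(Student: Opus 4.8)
The plan is to reduce the difference $u(t+\delta(t))-u(t)$ to a single evaluation of the derivative $u^\prime$ via the Mean Value Theorem, and then exploit that this evaluation point escapes to infinity. Since $u$ is differentiable, for each $t$ large enough that both $t$ and $t+\delta(t)$ lie in the domain there exists a point $\xi(t)$ between $t$ and $t+\delta(t)$ with
\begin{equation}
	u(t+\delta(t))-u(t)\,=\,u^\prime(\xi(t))\,\delta(t).
\end{equation}
This identity isolates the two factors whose limits we control separately: $\delta(t)\to\Delta$ holds by hypothesis, and $u^\prime(\xi(t))$ should tend to $\sigma$ provided $\xi(t)\to\infty$.

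The one point that genuinely needs checking is that $\xi(t)\to\infty$. Because $\delta(t)$ converges to the finite limit $\Delta$, it is bounded for large $t$: there are $T$ and a constant $B$ with $|\delta(t)|\le B$ for all $t\ge T$. Then $\xi(t)$, lying between $t$ and $t+\delta(t)$, satisfies $\xi(t)\ge t-B$, which forces $\xi(t)\to\infty$ as $t\to\infty$. Feeding this into the hypothesis $\lim_{s\to\infty}u^\prime(s)=\sigma$ along the arguments $\xi(t)$ yields $u^\prime(\xi(t))\to\sigma$.

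Finally I would combine the two limits. Since both factors on the right-hand side converge to finite limits (in every application $\sigma$ is $0$ or $1$), the product converges to the product of the limits:
\begin{equation}
	\lim_{t\to\infty}\big[u(t+\delta(t))-u(t)\big]\,=\,\Big(\lim_{t\to\infty}u^\prime(\xi(t))\Big)\Big(\lim_{t\to\infty}\delta(t)\Big)\,=\,\sigma\Delta,
\end{equation}
which is the claim. I expect no substantive obstacle here: the argument is a routine application of the Mean Value Theorem, and the only place requiring any care is the observation that the intermediate point $\xi(t)$ inherits divergence to infinity from $t$, which in turn rests on the boundedness of the convergent perturbation $\delta$.
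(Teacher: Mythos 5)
Your argument is correct and is essentially the paper's own proof: both apply the Mean Value Theorem to write the difference as $u^\prime(\xi(t))\,\delta(t)$ and then pass to the limit using $\xi(t)\to\infty$. Your extra care in verifying $\xi(t)\to\infty$ via the boundedness of the convergent $\delta$ is a welcome detail the paper leaves implicit, but it is not a different route.
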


\begin{proof}
By virtue of the mean value theorem, there exists some $s(t)\in[t,t+\delta(t)]$, such that
\begin{equation}
	u(t+\delta(t))\,-\,u(t)
	\,=\,
	u^\prime(s(t))\,\delta(t).
\end{equation}
When $t\to\infty$, $s(t)\to\infty$ as well, thus passing to the limit, we obtain the desired result.
\end{proof}
%
%


\bibliographystyle{plain} 
\bibliography{../bibliography/bibl}

\end{document}